	\newenvironment{wrapstuff}[2][]{%
		\par%
		\begin{figure}[htbp]%
			\centering
		}{%
		\end{figure}%
		\par%
	}
	\newtheorem{theorem}{Theorem}[section]
	\newtheorem{proposition}[theorem]{Proposition}
	\newtheorem{corollary}[theorem]{Corollary}
	\newtheorem{definition}[theorem]{Definition}
	\newtheorem{example}[theorem]{Example}
	\renewcommand{\footnotetextcopyrightpermission}[1]{} 
	\renewcommand{\footnotetextauthorsaddresses}[1]{} 
\title{Committee Monotonicity and Proportional Representation for Ranked Preferences}
\date{}
		\let\Cref\crtCref
		\let\cref\crtcref
\newcommand{\jr}{JR\xspace}
\newcommand{\pjr}{PJR\xspace}
\newcommand{\pjrp}{PJR+\xspace}
\newcommand{\rjr}{Rank-\jr}
\newcommand{\rpjr}{Rank-\pjr}
\newcommand{\rpjrp}{Rank-\pjrp}
\newcommand{\psc}{PSC\xspace}
\newcommand{\ipsc}{IPSC\xspace}
\DeclareMathOperator{\rank}{rank}
\DeclareMathOperator{\swap}{swap}
\newcommand{\periphery}[2]{\textup{periph}_{#2}(#1)}
\newcommand{\spref}{\succ}
\let\oldrhd\rhd
	\newcommand{\rhdsim}{\mathbin{\raise0ex\hbox{\ooalign{\hss$\oldrhd$\hss\cr%
				\kern0.6ex\raise-0ex\hbox{\kern-0.6ex$\sim$}}}\kern-0.2ex}}
	\newcommand{\rhdsim}{\mathbin{\raise0.15ex\hbox{\ooalign{\hss$\oldrhd$\hss\cr%
				\kern0.6ex\raise-0.82ex\hbox{\kern-0.6ex$\sim$}}}\kern-0.2ex}}
\renewcommand{\rhd}{\mathbin{\oldrhd\kern-0.2ex}}
\definecolor{alt-a}{HTML}{ffadad}
\definecolor{alt-b}{HTML}{ffd6a5}
\definecolor{alt-c}{HTML}{cdd1fa}
\definecolor{alt-c'}{HTML}{cdd1fa}
\definecolor{alt-x}{HTML}{fad4d7}
\definecolor{alt-d}{HTML}{fad4d7}
\definecolor{alt-e}{HTML}{b9d2fa}
\definecolor{alt-f}{HTML}{cbc2ff}
\definecolor{alt-g}{HTML}{ffc6ff}
\newcommand{\alternative}[2][default]{%
	\tikz[transform shape,scale=0.85]{
		\def\colorparam{#1}
		\ifthenelse{\equal{#1}{default}}{
			\def\colorparam{alt-#2}
		}{}
		\clip (0,3pt) circle (6pt);
		\node[anchor=base, yshift=0.5pt, text height=6pt,inner sep=1pt,fill=\colorparam, circle,minimum width=12pt] {};
		\node[anchor=base, text height=6pt, inner sep=1pt,minimum width=12pt,scale=1.15] {$#2$};
	}
}
\newcommand{\borderalternative}[2][default]{%
	\tikz[transform shape,scale=0.85]{
		\def\colorparam{#1}
		\ifthenelse{\equal{#1}{default}}{
			\def\colorparam{alt-#2}
		}{}
		\node[anchor=base, yshift=0.5pt, text height=6pt,inner sep=1pt,fill=\colorparam, circle,minimum width=12pt,draw=black!70,very thin] {};
		\node[anchor=base, text height=6pt, inner sep=1pt,minimum width=12pt,scale=1.15] {$#2$};
	}
}
\newcommand{\weakorder}[2][noborder]{%
	\begin{tikzpicture}
		[yscale=0.5]
		\foreach \row [count=\y] in {#2} {
			\ifthenelse{\y>1 \OR \equal{#1}{noborder}}{
				\node [anchor=south, inner sep=0.5pt] at (0, -\y) {
					\tikz[xscale=0.37]{
						\foreach \val/\color [count=\x] in \row {
							\ifthenelse{\equal{\val}{\color}}
							{\def\color{default}} %
							{}
							\node [inner sep=0] at (\x, 0) {\alternative[\color]{\val}};
						};}
				};
			}{
				\node [anchor=south, inner sep=0.5pt] at (0, -\y) {
					\tikz[xscale=0.37]{
						\foreach \val/\color [count=\x] in \row {
							\ifthenelse{\equal{\val}{\color}}
							{\def\color{default}} %
							{}
							\node [inner sep=0] at (\x, 0) {\borderalternative[\color]{\val}};
						};}
				};
			}
		}
	\end{tikzpicture}
}
\newcommand{\votermultiplicity}[2]{%
	\begin{tikzpicture}
		\node[anchor=base,inner sep=1pt] at (0,0) {#1};
		\node[anchor=north,inner sep=0] at (0,-0.2) {#2};
	\end{tikzpicture}
}
\pgfplotsset{
	shortl/.style={%
		legend image code/.code={
			\draw[##1,line width=1.6pt]
			plot coordinates {
				(0cm,0cm)
				(0.2cm,0cm)
			};%
		}
	},
}
\DeclareMathOperator*{\argmax}{arg\,max}
\author{Haris Aziz \\ UNSW Sydney \and
	Patrick Lederer \\ UNSW Sydney \and
	Dominik Peters \\ CNRS, LAMSADE, Universit\'e Paris Dauphine - PSL \and
	Jannik Peters \\ National University of Singapore \and
	Angus Ritossa \\ UNSW Sydney
	}
	\author{Haris Aziz}
	\email{haris.aziz@unsw.edu.au}
	\affiliation{\institution{UNSW Sydney} \country{Australia}}
	\author{Patrick Lederer}
	\email{p.lederer@unsw.edu.au} 
	\affiliation{\institution{UNSW Sydney} \country{Australia}}
	\author{Dominik Peters}
	\email{dominik.peters@lamsade.dauphine.fr}
	\affiliation{\institution{CNRS, LAMSADE, Universit\'e Paris Dauphine - PSL} \country{France}}
	\author{Jannik Peters}
	\email{peters@nus.edu.sg}
	\affiliation{\institution{National University of Singapore} \country{Singapore}}
	\author{Angus Ritossa}
	\email{a.ritossa@student.unsw.edu.au}
	\affiliation{\institution{UNSW Sydney} \country{Australia}}
\begin{document}

\begin{abstract}
We study committee voting rules under ranked preferences, which map the voters' preference relations to a subset of the alternatives of predefined size. In this setting, the compatibility between proportional representation and committee monotonicity is a fundamental open problem that has been mentioned in several works. We address this research question by designing a new committee voting rule called the Solid Coalition Refinement (SCR) rule that simultaneously satisfies committee monotonicity and Dummett's Proportionality for Solid Coalitions (\psc) property as well as one of its variants called inclusion PSC. This is the first rule known to satisfy both of these properties. Moreover, we show that this is effectively the best that we can hope for as other fairness notions adapted from approval voting are incompatible with committee monotonicity. For truncated preferences, we prove that the SCR rule still satisfies PSC and a property called independence of losing voter blocs, thereby refuting a conjecture of \citet{GJM24a}. Finally, we discuss the consequences of our results in the context of rank aggregation.
\end{abstract}

\maketitle
	
\vspace{1cm}
\setcounter{tocdepth}{2}
\tableofcontents
	
\newpage
\section{Introduction}

The Single Transferable Vote (STV) is a voting rule that allows the selection of multiple winners in an election. Voters provide rankings of the candidates that are running, and STV selects winners following the principles of proportional representation, which ensures that each group of voters has an influence approximately proportional to its size.
On a high level, STV is based on a minimum number of voters that a candidate needs to be elected (the so-called \emph{quota}). A candidate who is placed first by a number of voters exceeding the quota is elected; if not enough such candidates exist, the candidates with the fewest first-place votes are eliminated until additional candidates can be elected.
STV in its multi-winner proportional representation version was proposed as early as 1819, and is used for electing national legislatures in Australia, Ireland, and Malta, as well as local governments in New Zealand, Scotland, and Northern Ireland \citep{Tideman95}.

Recent literature in computational social choice has noted that proportional representation has applications far beyond parliament elections, such as in budgeting problems \citep{PPS21a,ReyPaBuSurvey2023}, recommender systems \citep{LuBo11d,SFL16a}, multi-criteria decision making \citep{LPW24a}, and training AIs based on collective preferences \citep{Pete24a}. A rule like STV could thus be useful in these applications. However, STV exhibits some undesirable behaviors when the number of candidates to be elected is changed. In particular, if the number of winners is increased, a candidate that was previously among the winners may now be designated as a loser.
In other words, STV violates \emph{committee monotonicity}, which requires a voting rule to select a superset of the current winners if the number of winners is increased  \citep{Elkind2017}. Unfortunately, this makes STV unsuitable for many 
applications of proportionally representative voting, where committee monotonicity is often a requirement because the number of winners is uncertain or might dynamically change.
Let us consider some examples.
\begin{itemize}
	\item A lecturer lets students vote over which topics will be covered in their course. Proportionality is desirable in this context because topics should be chosen so as to keep many students interested. The lecturer expects to be able to cover 6 topics, say, but the class may proceed more quickly than expected, so the lecturer needs the flexibility to expand coverage to 7--8 topics. A similar application concerns interactive Q\&A systems \citep{IsBr24a}, where the audience submits and votes on questions during a talk or panel discussion. Since it is hard to predict how many of the questions will be asked, committee monotonicity is desirable.
	\item A multi-criteria recommender system makes recommendations based on modelling the user's preferences along several dimensions. For example, it might rank hotels based on room size, price, location, etc., with the user specifying their relative importance. The recommender system will recommend 10 hotels, say, by combining these criteria following the importance weight via proportional representation. The users can click a ``show more'' button that will display 10 more hotels -- without thereby wanting to disqualify any of the initial 10 hotels.
	\item An award committee votes over who should receive awards. The committee has some discretion over the number of awards to hand out, depending on the strength of the nominees. Making the final decision is simplified if the voting rule is committee monotone. Proportionality is frequently desirable in these situations to cover different types of achievement.
\end{itemize}

\begin{figure}[t]
	\definecolor{alt-a}{HTML}{f6b395}
	\definecolor{alt-b}{HTML}{fddbc7}
	\definecolor{alt-c}{HTML}{fbe8d8}
	\definecolor{alt-d}{HTML}{f8f6e9}
	\definecolor{alt-e}{HTML}{e5eeec}
	\definecolor{alt-f}{HTML}{d1e5f0}
	\definecolor{alt-g}{HTML}{9bc6df}
	\centering
	\iflatexml
	\includegraphics{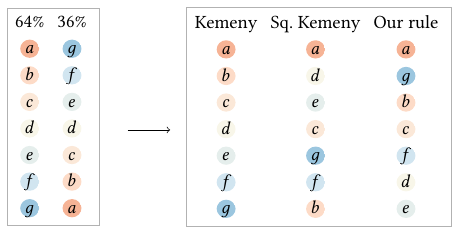}
	\else
	\scalebox{0.9}{
	\begin{tikzpicture}
		\node at (0,0) [draw=black!30] (step1) {	
			\votermultiplicity{64\%}{\weakorder{{a},{b},{c},{d},{e},{f},{g}}}
			\votermultiplicity{36\%}{\weakorder{{g},{f},{e},{d},{c},{b},{a}}}
		};
		
		\draw[->] (1.4,-0.25) -- (2.2,-0.25);
		
		\node at (5,0) [draw=black!30] (step2) {	
			\votermultiplicity{Kemeny}{\weakorder{{a},{b},{c},{d},{e},{f},{g}}}
			\votermultiplicity{Sq. Kemeny}{\weakorder{{a},{d},{e},{c},{g},{f},{b}}}
			\votermultiplicity{Our rule}{\weakorder{{a},{g},{b},{c},{f},{d},{e}}}
		};
	\end{tikzpicture}}
	\fi
	\caption{An example showing the behavior of the Kemeny rule, the Squared Kemeny rule, and our Solid Coalition Refinement rule that is based on the D'Hondt method and which produces a ranking satisfying PSC. The Kemeny rule simply returns the ranking that is returned by a majority of voters and is thus not proportional at all. The Squared Kemeny rule aims to produce an average ranking of the (weighted) input rankings, in terms of swap distance. Hence, it is proportional with respect to swap distance. However, it is not proportional with respect to its top-ranked alternatives: for example, one would expect that candidate $g$ appears within the first three alternatives in a proportional ranking (since it is top-ranked by more than a third of the voters). Our method leads to a ranking that is intuitively proportional.}
	\label{fig:kem}
\end{figure}

Another application of committee monotone rules is \emph{rank aggregation} which is the task of combining several rankings into one. This is the quintessential social choice problem, going back to \citet{Arrow1950}. Note that a committee monotone rule implicitly computes an output ranking of the candidates, whose top-$k$ candidates are the candidates selected by the rule when asked for $k$ winners. Thus, a committee monotone rule satisfying proportionality criteria can be used to generate \emph{proportional rankings} \citep{SLB+17a}, a task that has received increasing attention \citep{LPW24a,IsBr24a}. Proportional rankings can be used, for example, to produce university rankings by aggregating rankings based on different criteria such as research output and teaching performance. Standard rank aggregation methods like the Kemeny rule are majoritarian and thus not appropriate when proportionality is desired. The recently-studied Squared Kemeny rule \citep{LPW24a} behaves like an average and can thus be seen as providing a proportional outcome. However, its proportionality characteristics apply to the global ranking, and there is no guarantee that the top-$k$ candidates proportionally reflect the voters.
In many contexts, however, it is natural to place a greater emphasis on the top candidates in a ranking. Committee voting rules that are committee monotone and return proportional committees are a good fit in those cases, as every prefix of their computed ranking corresponds to a proportional committee. See \Cref{fig:kem} for an example where Kemeny and Squared Kemeny are not proportional in this sense.

As we mentioned, STV cannot be used for these applications, but STV is not the only proportional voting method. Others have been proposed more recently, such as the Quota Borda System (QBS) \citep{Dummett1984} and the Expanding Approvals Rule (EAR) \citep{Aziz2019}. These rules all satisfy strong proportionality properties. However, strikingly, these rules also all fail committee monotonicity. Indeed, as \citet{LPW24a} point out, it is an open question if there exists any rule combining committee monotonicity with strong notions of proportional representation. The fundamental reason that existing methods fail is that they are internally based on a \emph{quota}, which intuitively means that a candidate is a winner if more than $\frac{n}{k+1}$ voters support the candidate, where $n$ is the total number of voters and $k$ is the number of winners. However, if $k$ changes, the quota changes and thereby the identities of the winning candidates may change in unpredictable ways.

In this paper, we will design the first committee voting rules for ranked preferences that satisfy both proportional representation and committee monotonicity. 
Given the problems faced by rules defined using quotas, this requires developing entirely new approaches for constructing proportional voting rules. Let us specify our goal more precisely.
We focus on voters submitting ranked preferences, potentially with indifferences (i.e., weak orders). We will design new \emph{committee voting rules}, which are functions that map the voters' preferences and a desired number of winners to a subset of the candidates of that size.
For proportional representation, we focus on variants of Dummett's \textit{Proportionality for Solid Coalitions (\psc)} \citep{Dummett1984}. This is the most common formalization of proportionality for ranked preferences, and has been referred to as ``a \textit{sine qua non} for a fair election rule'' \citep{Woodall1994}. It provides representation guarantees for groups of voters who put the same candidates at the top of their rankings, and it has been shown that STV and the other rules we mentioned satisfy \psc. Our formal goal is thus to design committee voting rules that satisfy both \psc and committee monotonicity.

\begin{figure}[t!]
	\centering
	\resizebox{0.7\linewidth}{!}{%
		\begin{tikzpicture}[node distance=1.2cm, auto]
			
			\node (droop) {D-PSC};
			\node [below of=droop] (hare) {H-PSC};
			
			\node [right of=droop, node distance=2.4cm] (pjrplus) {Rank-PJR+};
			\node [below of=pjrplus] (pjr) {Rank-PJR};
			\node [below of=pjr] (jr) {Rank-JR};
			\definecolor{darkgreen}{rgb}{0.0, 0.8, 0.0}
			
			\node[draw, dotted, line width=0.5mm, dash pattern=on 3pt off 3pt, draw=darkgreen, fit=(droop) (hare), inner sep=0.15cm] {};
			\node[above of=droop, anchor=south, node distance=0.45cm, align=center] (label) {Compatible\\Thm. \ref{thm:revseq}};
			
			\node[draw, dotted, line width=0.5mm, dash pattern=on 3pt off 3pt, draw=red!80!black, fit=(pjrplus) (pjr) (jr), inner sep=0.15cm] {};
			\node[above of=pjrplus, anchor=south, node distance=0.45cm, align=center] (label) {Incompatible\\Thm. \ref{thm:rjr}};

			\draw[->, line width=0.4mm] (droop) -- (hare);
			\draw[->, line width=0.4mm] (pjrplus) -- (pjr);
			\draw[->, line width=0.4mm] (pjr) -- (jr);
			\draw[->, line width=0.4mm] (pjr) -- (hare);
			
			\node [right of=pjrplus, node distance=3cm] (idroop) {D-IPSC};
			\node [right of=idroop, node distance=2cm] (ihare) {H-IPSC};
			\node [below of=idroop] (gdroop) {D-GPSC};
			\node [below of=ihare] (ghare) {H-GPSC};
			
			\node [right of=ihare, node distance=2.4cm] (wpjrplus) {Rank-PJR+};
			\node [below of=wpjrplus] (wpjr) {Rank-PJR};
			\node [below of=wpjr] (wjr) {Rank-JR};
			
			\node[draw, dotted, line width=0.5mm, dash pattern=on 3pt off 3pt, draw=darkgreen, fit=(idroop) (ihare) (gdroop) (ghare), inner sep=0.15cm] (dottedbox) {};
			\node[above of=idroop, anchor=south, node distance=0.45cm, align=center, xshift=1cm] (label) {Compatible\\Thm. \ref{thm:SCR-ipsc-and-monotone}};
			
			\node[draw, dotted, line width=0.5mm, dash pattern=on 3pt off 3pt, draw=red!80!black, fit=(wpjrplus) (wpjr) (wjr), inner sep=0.15cm] {};
			\node[above of=wpjrplus, anchor=south, node distance=0.45cm, align=center] (label) {Incompatible\\Thm. \ref{thm:rjr}};
			
			\draw[->, line width=0.4mm] (idroop) -- (ihare);
			\draw[->, line width=0.4mm] (gdroop) -- (ghare);
			\draw[->, line width=0.4mm] (ihare) -- (ghare);
			\draw[->, line width=0.4mm] (idroop) -- (gdroop);
			
			\draw[->, line width=0.4mm] (wpjrplus) -- (ihare);
			\draw[->, line width=0.4mm] (wpjrplus) -- (wpjr);
			\draw[->, line width=0.4mm] (wpjr) -- (ghare);
			\draw[->, line width=0.4mm] (wpjr) -- (wjr);
			
			\draw[black!50, line width=0.6mm] (pjrplus) ++(1.6cm,1.9cm) -- ++(0,-4.85cm);
			
			\node at(1.5, 1.7) {\Large Strict Preferences};
			\node at(8, 1.7) {\Large Weak Preferences};
		\end{tikzpicture}
	}
	\vspace{-7pt}
	\caption{A summary of our results. Axioms in the green boxes are compatible with committee monotonicity, whereas the axioms in the red boxes are incompatible. An arrow from one axiom to another means that the first implies the second.
		``D-'' is short for Droop and ``H-'' for Hare.
		Definitions for GPSC (Generalized \psc), \rpjr, and \rpjrp as well as proofs of the relationships between the axioms can be found in the work of \citet{BrPe23a}.
	}
	\label{fig:results}
\end{figure}

\paragraph{Contributions.} 
We begin our work by focussing on strict preferences (linear orders without indifferences). For this setting, we show that any voting rule satisfying \psc can be bootstrapped into a committee monotone voting rule satisfying \psc by running it in a ``reverse sequential'' mode (\Cref{thm:revseq}). In this mode, we repeatedly use the base rule to delete the worst candidate, namely the candidate that is not declared a winner by the base rule when the number of winners is one less than the number of available candidates.
Based on this result, we can construct an entire family of committee monotone voting rules that satisfy \psc, for example based on STV or EAR. However, this family of rules is rather technical as we need to repeatedly compute existing rules to derive the winning committee. This also means that a large amount of computation is necessary to determine the winners for these rules, especially if the number of desired winners is small.

We hence design another committee monotone rule, the so-called Solid Coalition Refinement (SCR) rule, that satisfies \psc while working more directly. This rule is inspired by the D'Hondt apportionment method, and works by identifying virtual parties in the preferences (which correspond to sets of candidates that are frequently ranked on top together). In more detail, analogous to the D'Hondt method of apportionment, the SCR rule repeatedly adds a candidate that represents the most underrepresented group of voters to the winning committee. Moreover, since there can be multiple such candidates, the exact candidate is determined through a process of refinement. As we will show, this rule satisfies committee monotonicity and \psc, and it can be computed in polynomial time if the voters report strict preferences. Even more, the SCR rule is well-defined for weak preferences and satisfies a generalization of \psc called inclusion PSC \citep{azizlee2021} for this setting (\Cref{thm:SCR-ipsc-and-monotone}). We note here that reverse sequential rules do not satisfy inclusion \psc when voters have weak preferences. Since the SCR rule works for weak orders, it is also applicable on the domain of approval votes (dichotomous preferences), on which it satisfies the well-known PJR axiom as a consequence of inclusion PSC. This is the strongest proportionality guarantee that is known to be compatible with committee monotonicity on the approval domain. However, our rule is the first rule that generalizes to weak orders: the existing committee monotone approval-based rules due to Phragm\'en fail PSC when generalized in the natural way \citep{Jans16a}.

In order to be committee monotone, the SCR rule does not make use of quotas. As a result, it avoids some other paradoxes that implicitly occur due to quota dependence. In particular, \citet{GJM24a} studied a model with truncated preferences, where voters strictly rank only some of the candidates. They noticed that in this setting, all known \psc method fail a property they call independence of losing voter blocs. This property requires that the outcome of a rule should not change if we delete voters who do not rank any winning candidates. Intuitively, existing \psc methods fail because deleting a voter bloc changes the value of the quota. \citet{GJM24a} hypothesized that \psc and independence of losing voter blocs are incompatible.
However, we show that the SCR rule satisfies both \psc and independence of losing voter blocs for truncated preferences (\Cref{thm:truncated}), thus disproving their conjecture. 

Moreover, we examine the compatibility of committee monotonicity with a family of proportionality notions due to \citet{BrPe23a} that adapt fairness axioms from approval-based committee voting to ranked preferences. However, it turns out that committee monotonicity is even incompatible with \rjr, the weakest such proportionality notion (\Cref{thm:rjr}). This demonstrates a striking difference between \psc and the new proportionality notions by \citeauthor{BrPe23a}, and it suggests that it may be impossible to attain stronger proportionality conditions than \psc with committee monotone voting rules.

Finally, we analyze the consequences of our results for rank aggregation, where the outcome is a ranking of the candidates instead of a committee. To this end, we first note that every committee monotone committee voting rule that satisfies \psc can be seen as a rank aggregation rule that guarantees that each prefix of the chosen ranking satisfies \psc. Moreover, as an alternate concept of proportional representation, we investigate the maximum swap distance between an input ranking and a ranking satisfying \psc, as a function of the fraction of voters that report the input ranking. This approach was recently suggested by \citet{LPW24a} to argue in favor of the Squared Kemeny rule. Specifically, we show in \Cref{thm:swap} that rankings satisfying \psc give comparable or even slightly better bounds than those chosen by the Squared Kemeny rule with respect to the maximum swap distance to an input ranking, which further justifies our approach.

\section{Related Work}

We next give a brief overview of prior works on proportional representation and committee monotonicity for committee voting with ranked preferences.

\paragraph{Proportional Representation.}
The problem of finding representative committees has a long tradition as the {Single Transferable Vote (STV)} was already proposed in the 19th century  \citep{Tideman95}. This rule is now widespread and used for elections in many countries, and it has been shown that STV provides proportional representation since it satisfies \psc \citep{Woodall1994,TidemanBetterVoting2000}. As a consequence of its widespread use, many works have focused on better understanding STV \citep[e.g.,][]{BartholdiSTV1991,Elkind2017,DePe24a}.
Moreover, numerous other committee voting rules have been suggested with the aim of finding representative outcomes. Examples include the Chamberlin--Courant rule \citep{ChCo83a,LuBo11d}, Monroe's rule \citep{Monr95a}, and various forms of positional scoring rules \citep{FSST19a}. However, these rules do not satisfy \psc. Other rules than STV that satisfy this condition include the \emph{Quota Borda System (QBS)} of \citet{Dummett1984} and the \emph{Expanding Approvals Rule (EAR)} of \citet{Aziz2019}. 
\citet{Dummett1984} suggests that QBS may be less chaotic than STV, while \citet{Aziz2019} motivate EAR by the observation that it satisfies monotonicity conditions that STV fails. Moreover, \citet{BrPe23a} show that EAR satisfies a fairness condition called \rpjrp which is violated by STV. 
Finally, \citet{AZIZ2022248} characterize committees satisfying \psc in terms of \emph{minimal demand rules}; however, it is not straightforward to derive appealing rules from this characterization. 
Rules based on pairwise comparisons between committees have also been discussed \citep{Schu18a,Tideman95,TidemanBetterVoting2000}, though it is unclear if they satisfy PSC.

\paragraph{Committee Monotonicity.}
Just like proportional representation, committee monotonicity (called \emph{house monotonicity} in other settings) was identified early on as a desirable property of committee elections. 
For instance, in 1880, the chief clerk of the census office of the United States noticed that Alabama would be allocated 8 seats in a 299-seat parliament but only 7 seats in a 300-seat parliament, a paradox which is today known as the ``Alabama Paradox'' 
\citep{FairRepresentationBook}. 
It is known that committee monotonicity and proportional representation are compatible in the simpler setting of (approval-based) apportionment \citep{FairRepresentationBook,BGP+24a}, where voters vote for parties that can be assigned multiple seats. By contrast, the analysis of committee monotonicity for committee elections consists largely of counterexamples showing that specific classes of rules fail this axiom \citep{Staring1986,Ratliff2003,BaCo08a,Kamwa2013,Elkind2017, McGr23a}. 
For instance, \citet{Elkind2017} present a counterexample showing that STV fails committee monotonicity.
On the other hand, \citet{Jans16a} showed that several rules, such as Phragm\'en's Ordered Method and Thiele's Ordered Method, satisfy committee monotonicity. However, these rules fail \psc. Finally, as mentioned before, voting rules that output rankings can be seen as committee monotone committee voting rules. Hence, the study of proportional rank aggregation rules \citep[][]{SLB+17a,LPW24a} is related to our work, but these papers focus on proportionality notions specific to rankings.

\section{Preliminaries}
Let $N=\{1,\dots, n\}$ denote a set of $n$ voters and let $C=\{c_1,\dots, c_m\}$ denote a set of $m$ candidates. We assume that every voter $i\in N$ reports a \emph{(weak) preference relation $\succsim_i$} over the candidates, which is formally a complete and transitive binary relation on $C$.
The notation $c \succsim_i c'$ denotes that voter $i$ weakly prefers candidate $c$ to $c'$. We write $c \succ_i c'$ if $c \succsim_i c'$ but not $c' \succsim_i c$, which indicates a strict preference.
We call a preference relation \emph{strict} if it is antisymmetric, i.e., there is no indifference between any two candidates. 
The set of all weak preference relations is denoted by $\mathcal R$ and the set of all strict preference relations by $\mathcal{L}$. 
A \emph{preference profile} $R$ is the collection of the voters' preferences, i.e., it is a function from $N$ to $\mathcal{R}$. A preference profile is \emph{strict} if all voters have strict preference relations. The set of all preference profiles is $\mathcal{R}^N$ and the set of all strict preference profiles is $\mathcal{L}^N$. 

Given a preference profile, our goal is to select a \emph{committee}, which is formally a subset of the candidates of a given size $k$. To this end, we use \emph{committee voting rules}, which for every preference profile and committee size $k$ return a committee of that size. More formally, a committee voting rule $f$ for strict (resp. weak) preferences maps every profile $R\in \mathcal{L}^N$ (resp. $\mathcal{R}^N$) and target committee size $k\in \{1,\dots, m\}$ to a winning committee $W=f(R,k)$ with $|W|=k$. Thus, committee voting rules always choose a single committee and so are \emph{resolute}.

We next introduce committee monotonicity and proportionality for solid coalitions.

\subsection{Committee Monotonicity}

The idea of committee monotonicity is that if some candidate is selected for a committee size $k$, then it should also be selected for every committee size $k'>k$. The formal definition is as follows.

\begin{definition}[Committee monotonicity]
	A committee voting rule $f$ is \emph{committee monotone} if $f(R,k) \subseteq f(R,{k+1})$ for all preference profiles $R$ and committee sizes $k \in \{1,\dots, m-1\}$.
\end{definition} 

\citet[Section 2.3.1]{FaliszewskiTrends2017} deem committee monotonicity a necessity for ex\-cell\-ence-based elections, where the goal is to choose the individually best candidates for the considered problem. \citet[Section 5]{Elkind2017} give the example of selecting finalists of a competition where they believe that committee monotonicity is ``imperative''. On the other hand, they write that ``in the context of proportional representation insisting on a committee-monotone rule may prevent us from selecting a truly representative committee''. They illustrate this with an example (attributed to \citet{Black1958}) of a single-peaked society, where for $k = 1$ it is most natural to select the median candidate while for $k = 2$, the committee should intuitively consist of a ``moderate left-wing'' and a ``moderate right-wing'' candidate. While it is true that committee monotonicity comes at a cost for the representativeness of the chosen committees, we believe that it is a necessary requirement for many applications of proportionally representative voting (see the introduction for examples).

As we explain in \Cref{sec:rank-jr}, a committee monotone rule can be transformed into a rule that returns a ranking of the candidates instead of committees. This observation is also discussed by \citet[Theorem 2]{Elkind2017}.

\subsection{Proportionality for Solid Coalitions}\label{subsec:PSC}

Next, we introduce our central fairness concept called proportionality for solid coalitions (PSC), which is due to \citet{Dummett1984}. The idea of this axiom is that, if there is a sufficiently large set of voters $N'\subseteq N$ that all prefer the candidates in a subset $C'\subseteq C$ to the candidates in $C\setminus C'$, then this group should be represented by a number of candidates in $C'$ that is proportional to the size of~$N'$. We will first define this axiom for strict preferences before presenting a variant called inclusion \psc (\ipsc) due to \citet{azizlee2021} that can also accommodate weak preferences. Note that we will define these axioms as properties of committees; a committee voting rule satisfies PSC or IPSC if its selected committee always satisfies the given axiom. 

To formalize PSC, we define a \emph{solid coalition} for a set of candidates $C'\subseteq C$ as a group of voters $N'\subseteq N$ such that $c'\succ_i c$ for all voters $i\in N'$ and candidates $c'\in C'$, $c\in C\setminus C'$. In this case, we also say that the voters in $N'$ \emph{support} the candidates $C'$. We emphasize that the voters in $N'$ do not have to agree on the order of the candidates in $C'$ and that a voter can be part of multiple solid coalitions. 
Now, proportionality for solid coalitions postulates that each solid coalition $N'$ of size $|N'|>\ell\cdot \frac{n}{k+1}$ (for some $\ell\in\mathbb N$) is represented by at least $\ell$ candidates or the set $C'$ if $|C'|<\ell$.

\begin{definition}[Proportionality for solid coalitions \citep{Dummett1984}]
	\label{def:psc}
	A committee $W$ satisfies \emph{proportionality for solid coalitions} (\psc or Droop-\psc) for a preference profile $R$ and a committee size~$k$ if for all integers $\ell\in\mathbb{N}$ and solid coalitions $N'$ supporting a set $C'$ with $|N'| > \ell \cdot \frac{n}{k+1}$, it holds that $C'\subseteq W$ or $|W\cap C'|\geq \ell$.
\end{definition}

Some authors consider the \psc axiom together with the \emph{Hare quota} of $\frac{n}{k}$ instead of the Droop quota of $\frac{n}{k+1}$. This gives rise to a weaker axiom that we call \emph{Hare-PSC}, for which we replace the condition $|N'| > \ell \cdot \frac{n}{k+1}$ by $|N'| \ge\ell \cdot \frac{n}{k}$ in \Cref{def:psc}. All our positive results will apply to the stronger Droop-PSC, while all our counterexamples will work even for Hare-PSC.

\iflatexml
\begin{figure}[t]
	\centering
	\includegraphics{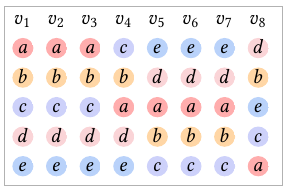}
	\caption{An example for PSC.}
	\label{fig:psc_exp}
\end{figure}
\else
\begin{wrapstuff}[r,type=figure,width=5cm]
	\centering
	\begin{tikzpicture}
		\node at (0,0) [draw=black!30] (step1) {	
			\votermultiplicity{$v_1$}{\weakorder{{a},{b},{c},{d},{e}}}
			\votermultiplicity{$v_2$}{\weakorder{{a},{b},{c},{d},{e}}}
			\votermultiplicity{$v_3$}{\weakorder{{a},{b},{c},{d},{e}}}
			\votermultiplicity{$v_4$}{\weakorder{{c},{b},{a},{d},{e}}}
			\votermultiplicity{$v_5$}{\weakorder{{e},{d},{a},{b}, {c}}}
			\votermultiplicity{$v_6$}{\weakorder{{e},{d},{a},{b}, {c}}}
			\votermultiplicity{$v_7$}{\weakorder{{e},{d},{a},{b}, {c}}}
			\votermultiplicity{$v_8$}{\weakorder{{d},{b},{e},{c}, {a}}}
		};
	\end{tikzpicture}
	\vspace{-7pt}
	\caption{An example for PSC.}
	\label{fig:psc_exp}
\end{wrapstuff}
\fi
\begin{example}[Droop-PSC and Hare-PSC]
	Consider the profile shown in \Cref{fig:psc_exp} with $n=8$ voters for $k = 2$. Here, the Hare quota $\frac{n}{k}$ is $4$ while the Droop quota $\frac{n}{k+1}$ is $\frac{8}{3} < 3$. For this instance, Hare-PSC imposes only one non-trivial inclusion: the voters $\{v_1,\dots, v_4\}$ for a solid coalition for $\{a,b,c\}$, so one of these candidates must be selected. The candidate $d$ does not need to be included, as for instance, voters $\{v_5,\dots, v_8\}$ do not have any common non-trivial prefix containing $d$. Droop-PSC, on the other hand, forces the selected committee to be $\{a,e\}$, as three voters are enough to force a single candidate to be part of the committee. 
\end{example}

We will next present a generalization of \psc to weak preferences due to \citet{azizlee2021}. To this end, we call a set of voters $N'\subseteq N$ a \emph{generalized solid coalition} supporting a set of candidates $C'\subseteq C$ if for all voters $i\in N'$, we have $c' \succsim_i c$ for all candidates $c'\in C'$ and $c\in C\setminus C'$. That is, generalized solid coalitions only have to weakly prefer the candidates in $C'$ to those in $C\setminus C'$. 
Furthermore, following \citet{azizlee2021}, we define the \emph{periphery} 
\[
	\periphery{C'}{N'} = \{ c \in C : \text{ there exists } {i \in N'} \allowbreak \text{and } c' \in C' \text{ such that } c \succsim_i c' \}
\]
of $C'$ with respect to $N'$. This is a ``closure'' of $C'$, containing $C'$ as well as all candidates that at least one member of $N'$ weakly prefers to a member of $C'$. To construct the periphery, we can imagine going through the indifference classes of the weak order for each member of a generalized solid coalition $N'$. The most-preferred indifference classes will consist only of members of $C'$. These are potentially followed by a single ``mixed'' indifference class containing both members of $C'$ and other candidates. All these other candidates are placed in the periphery. The remaining indifference classes must be disjoint from $C'$.

\iflatexml
\begin{figure}[t]
	\centering
	\includegraphics{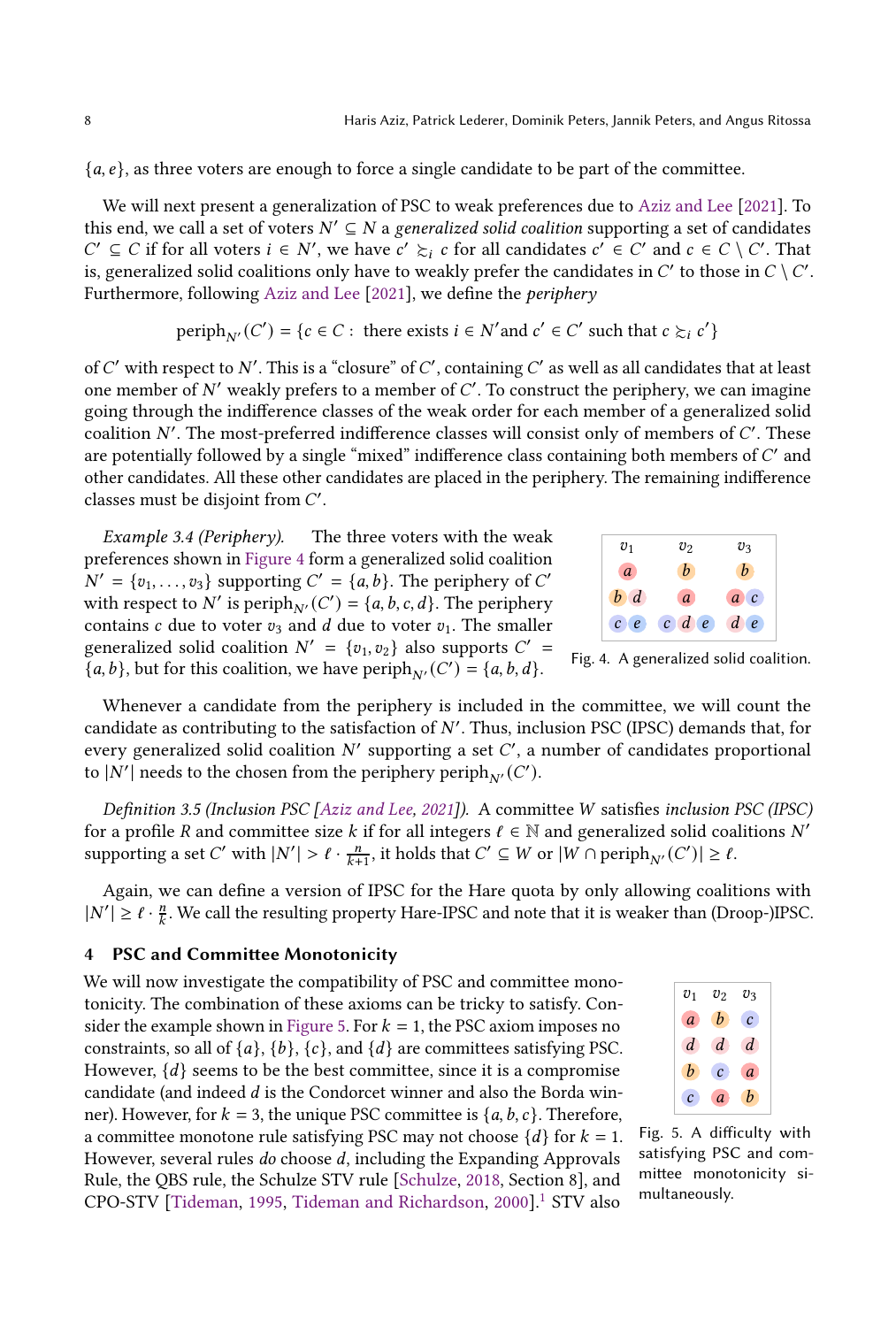}
	\caption{A generalized solid coalition.}
	\label{fig:periphery}
\end{figure}
\else
\begin{wrapstuff}[r,type=figure,width=4.6cm]
	\vspace{-0.1cm}
	\begin{tikzpicture}
		\node at (0,0) [draw=black!30] (step1) {
			\votermultiplicity{$v_1$}{\weakorder{{a},{b, d}, {c, e}}}
			\votermultiplicity{$v_2$}{\weakorder{{b},{a},{c,d, e}}}
			\votermultiplicity{$v_3$}{\weakorder{{b},{a,c},{d,e}}}
		};
	\end{tikzpicture}
	\vspace{-7pt}
	\caption{A generalized solid coalition.}
	\label{fig:periphery}
\end{wrapstuff}
\fi
\begin{example}[Periphery]
	The three voters with the weak preferences shown in \Cref{fig:periphery} form a generalized solid coalition $N' = \{v_1, \dots, v_3\}$ supporting $C' = \{a,b\}$. The periphery of $C'$ with respect to $N'$ is $\periphery{C'}{N'} = \{a,b,c,d\}$. The periphery contains $c$ due to voter $v_3$ and $d$ due to voter $v_1$. The smaller generalized solid coalition $N' = \{v_1, v_2\}$ also supports $C' = \{a,b\}$, but for this coalition, we have $\periphery{C'}{N'} = \{a,b,d\}$.
\end{example}

Whenever a candidate from the periphery is included in the committee, we will count the candidate as contributing to the satisfaction of $N'$.
Thus, inclusion \psc (\ipsc{}) demands that, for every generalized solid coalition $N'$ supporting a set $C'$,  a number of candidates proportional to~$\rvert N'\rvert$ needs to the chosen from the periphery $\periphery{C'}{N'}$. 

\begin{definition}[Inclusion \psc \citep{azizlee2021}]
A committee $W$ satisfies \emph{inclusion \psc (\ipsc)} for a profile $R$ and committee size $k$ if for all integers $\ell\in\mathbb{N}$ and generalized solid coalitions $N'$ supporting a set $C'$ with $|N'|> \ell \cdot \frac{n}{k+1}$, it holds that $C' \subseteq W \text{ or } |W\cap \periphery{C'}{N'}|\ge \ell$.
\end{definition}

Again, we can define a version of IPSC for the Hare quota by only allowing coalitions with $|N'|\geq \ell\cdot \frac{n}{k}$. We call the resulting property Hare-IPSC and note that it is weaker than (Droop-)\ipsc.

\section{\psc and Committee Monotonicity}\label{sec:psc}

\iflatexml
\begin{figure}[t]
	\centering
	\includegraphics{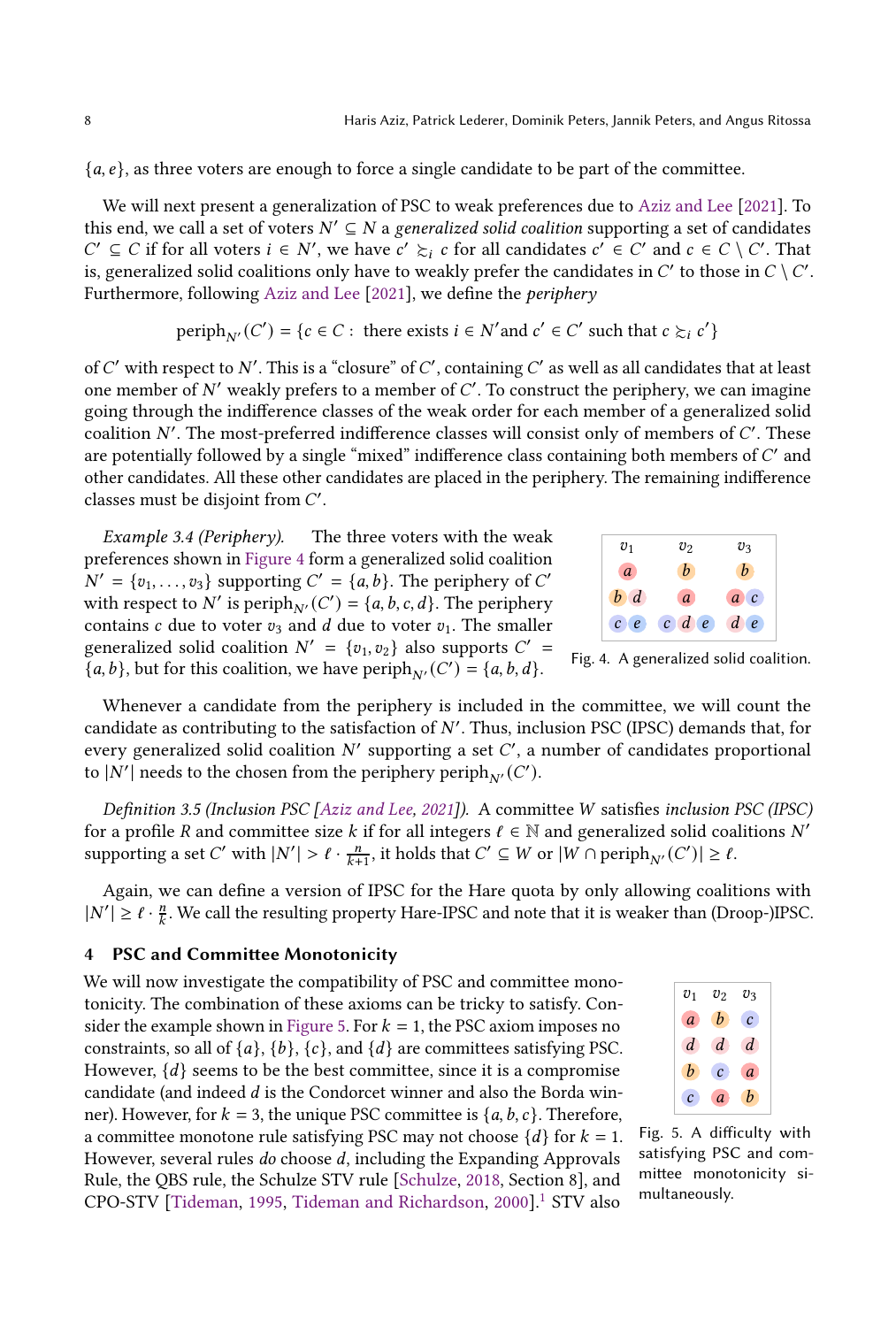}
	\caption{A difficulty with satisfying \psc and committee monotonicity simultaneously.}
	\label{fig:psc_committee_monotonicity_conflict}
\end{figure}
\else
\begin{wrapstuff}[r,type=figure,width=3.3cm]
	\centering
	\begin{tikzpicture}
		\node at (0,0) [draw=black!30] (step1) {	
			\votermultiplicity{$v_1$}{\weakorder{{a},{d},{b},{c}}}
			\votermultiplicity{$v_2$}{\weakorder{{b},{d},{c},{a}}}
			\votermultiplicity{$v_3$}{\weakorder{{c},{d},{a},{b}}}
		};
	\end{tikzpicture}
	\vspace{-7pt}
	\caption{A difficulty with satisfying \psc and committee monotonicity simultaneously.}
	\label{fig:psc_committee_monotonicity_conflict}
\end{wrapstuff}
\fi
\noindent
We will now investigate the compatibility of \psc and committee monotonicity. The combination of these axioms can be tricky to satisfy. Consider the example shown in \Cref{fig:psc_committee_monotonicity_conflict}.
For $k = 1$, the \psc axiom imposes no constraints, so all of $\{a\}$, $\{b\}$, $\{c\}$, and $\{d\}$ are committees satisfying \psc. However, $\{d\}$ seems to be the best committee, since it is a compromise candidate (and indeed $d$ is the Condorcet winner and also the Borda winner). However, for $k = 3$, the unique \psc committee is $\{a,b,c\}$. Therefore, a committee monotone rule satisfying \psc may not choose $\{d\}$ for $k = 1$. However, several rules \emph{do} choose $d$, including the Expanding Approvals Rule, the QBS rule, the Schulze STV rule \citep[Section 8]{Schu18a}, and CPO-STV \citep{Tideman95,TidemanBetterVoting2000}.\footnotemark\ 
STV also fails committee monotonicity; an example is given by \citet[Proposition 3]{Elkind2017}.
In \Cref{app:existing-rules-and-comm-mon}, we construct additional counterexamples showing that all rules in Dummett's family of \emph{Quota Preference Score} rules also fail committee monotonicity. 

\footnotetext{For Schulze STV, no rigorous proof that it satisfies PSC has been published. For CPO-STV, it is open whether it satisfies PSC \citep[page 282]{tideman2017collective}.}

Since none of the known committee voting rules satisfy both \psc and committee monotonicity, we will design new voting rules to achieve both axioms simultaneously. As our first result we show that for strict preferences, there is a simple way to achieve PSC and committee monotonicity by modifying existing rules. To this end, we introduce the \emph{reverse sequential rule} $f^\mathit{RS}$ of a committee voting rule $f$. Roughly, these reverse sequential rules compute the winning committee by repeatedly using the original rule $f$ to identify alternatives that are to be removed from the winning committee. To make this more formal, we let $R|_{X}$ denote the restriction of a preference profile $R$ to the set $X \subseteq C$, i.e., we derive $R|_X$ by deleting the alternatives $C\setminus X$ from $R$. Then, the reverse sequential rule $f^\mathit{RS}$ of a committee voting rule $f$ is defined recursively by $f^\mathit{RS}(R,m)=C$ and for all $k\in \{m-1, \dots, 1\}$, $f^\mathit{RS}(R, k)=f(R|_{X}, k)$ where $X = f^\mathit{RS}(R,k+1)$.

\begin{example}[Reverse sequential rules]
	Let us reconsider the profile in \Cref{fig:psc_committee_monotonicity_conflict}, where most PSC methods fail committee monotonicity because they select the committee $\{d\}$ for $k = 1$. However, if we run these rules in the reverse sequential mode, we first compute them for $k = 3$, when they return the unique PSC committee $\{a,b,c\}$. Thereafter, we consider the profile with $d$ removed. Thus, such reverse sequential rules will not select $\{d\}$ for $k = 1$.
\end{example}

We show next that for strict preferences, the reverse sequential rule $f^\mathit{RS}$ satisfies committee monotonicity and PSC if the original rule $f$ satisfies PSC. This means that, e.g., the reverse sequential rule of STV satisfies both of our desiderata. 

\begin{theorem}\label{thm:revseq}
	Under strict preferences, for every committee voting rule $f$ that satisfies \psc, the reverse sequential rule $f^\mathit{RS}$ satisfies \psc and is committee monotone. 
\end{theorem}
\begin{proof}
	The reverse sequential rule $f^\mathit{RS}$ of a committee voting rule $f$ satisfies committee monotonicity because $f^\mathit{RS}(R, k)=f(R|_{f^\mathit{RS}(R,k+1)}, k)\subseteq f^\mathit{RS}(R,k+1)$ for all $k\in \{1,\dots, m-1\}$ and $R\in\mathcal{L}^N$. 
	
	Next, suppose that $f$ satisfies \psc. Fix a profile $R\in\mathcal{L}^N$, and write $W_k = f^\mathit{RS}(R,k)$ for all $k\in \{1,\dots, m\}$. Further, we fix some $k$, and we will show that $W_k$ satisfies \psc.
	For this, let $N' \subseteq N$ be a solid coalition such that $|N'|> \ell \frac{n}{k+1}$ for some $\ell\in\mathbb{N}$, and let $C' \subseteq C$ denote the set of candidates supported by $N'$. 
	Now, to establish \psc, we need to show that $|W_k \cap C'| \ge \min(|C'|, \ell)$. We prove by a backwards induction that $|W_t \cap C'| \ge \min(|C'|, \ell)$ for all $t \in \{m, m-1, \dots, k\}$.
	
	Clearly, at the start of the reverse sequential rule, when $t = m$, we have $|W_m \cap C'| = |C'| \ge \min(|C'|, \ell)$ because $W_m = C$. Suppose that we have shown that $|W_t \cap C'| \ge \min(|C'|, \ell)$ for some $t \in \{m, \dots, k+1\}$. If it even holds that $|W_t \cap C'| > \min(|C'|, \ell)$, then we are done, because $W_{t-1}$ is obtained from $W_t$ by deleting only one alternative, and thus $|W_{t-1} \cap C'| \ge \min(|C'|, \ell)$. So suppose that $|W_{t} \cap C'| = \min(|C'|, \ell)$. Note that $N'$ is a solid coalition with $|N'| > \ell \frac{n}{k+1} \ge \ell \frac{n}{t}$ supporting the set $W_{t}\cap C'$ in the profile $R|_{W_{t}}$. Since $f$ satisfies PSC, at least $\min(|W_{t}\cap C'|, \ell)$ candidates from $W_t\cap C'$ are chosen by $f$ for $R|_{W_t}$ and the committee size $t-1$.
	Because $|W_t\cap C'|=\min(|C'|,\ell)$, it holds that $\min(|W_t\cap C'|, \ell)=\min(|C'|,\ell)$ and thus $W_{t}\cap C' \subseteq f(R|_{W_{t}}, t-1) = W_{t - 1}$. It follows that $|W_{t-1}\cap C'| \ge \min(|C'|, \ell)$, establishing the induction step.
\end{proof}
	\iflatexml
	\begin{figure}[t]
		\centering
		\includegraphics{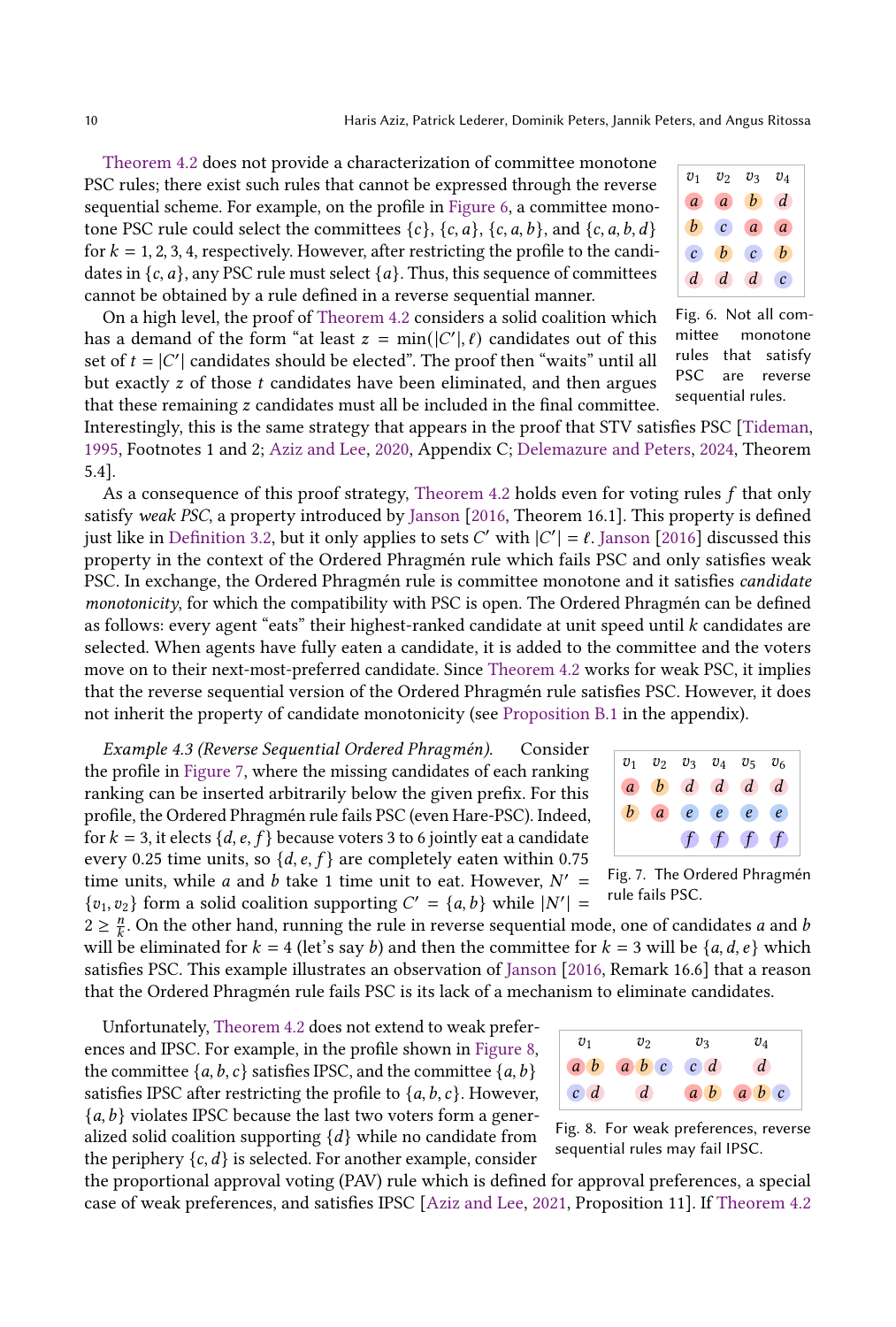}
		\caption{Not all committee monotone rules that satisfy \psc are reverse sequential rules.}
		\label{fig:rev_seq_not_everything}
	\end{figure}
	\else
	\begin{wrapstuff}[r,type=figure,width=2.6cm]
		\begin{tikzpicture}
			\node at (0,0) [draw=black!30] (step1) {
				\votermultiplicity{$v_1$}{\weakorder{{a},{b},{c},{d}}}
				\votermultiplicity{$v_2$}{\weakorder{{a},{c},{b},{d}}}
				\votermultiplicity{$v_3$}{\weakorder{{b},{a},{c},{d}}}
				\votermultiplicity{$v_4$}{\weakorder{{d},{a},{b},{c}}}
			};
		\end{tikzpicture}
		\vspace{-7pt}
		\caption{Not all committee monotone rules that satisfy \psc are reverse sequential rules.}
		\label{fig:rev_seq_not_everything}
	\end{wrapstuff}
	\fi
\Cref{thm:revseq} does not provide a characterization of committee monotone \psc rules; there exist such rules that cannot be expressed through the reverse sequential scheme. For example, on the profile in \Cref{fig:rev_seq_not_everything}, a committee monotone \psc rule could select the committees $\{c\}$, $\{c,a\}$, $\{c,a,b\}$, and $\{c,a,b,d\}$ for $k = 1,2,3,4$, respectively. However, after restricting the profile to the candidates in $\{c,a\}$, any PSC rule must select $\{a\}$. Thus, this sequence of committees cannot be obtained by a rule defined in a reverse sequential manner.

On a high level, the proof of \Cref{thm:revseq} considers a solid coalition which has a demand of the form ``at least $z=\min(|C'|, \ell)$ candidates out of this set of $t=|C'|$ candidates should be elected''. The proof then ``waits'' until all but exactly $z$ of those $t$ candidates have been eliminated, and then argues that these remaining $z$ candidates must all be included in the final committee.
Interestingly, this is the same strategy that appears in the proof that STV satisfies PSC [\citealp{Tideman95}, Footnotes 1 and 2; \citealp{Aziz2019}, Appendix C; \citealp{DePe24a}, Theorem 5.4].

As a consequence of this proof strategy, \Cref{thm:revseq} holds even for voting rules $f$ that only satisfy \emph{weak PSC}, a property introduced by \citet[Theorem 16.1]{Jans16a}. This property is defined just like in \Cref{def:psc}, but it only applies to sets $C'$ with $|C'| = \ell$.
\citet{Jans16a} discussed this property in the context of the Ordered Phragmén rule which fails PSC and only satisfies weak PSC. In exchange, the Ordered Phragmén rule is committee monotone and it satisfies \emph{candidate monotonicity}, for which the compatibility with PSC is open. The Ordered Phragm\'en can be defined as follows: every agent ``eats'' their highest-ranked candidate at unit speed until $k$ candidates are selected. When agents have fully eaten a candidate, it is added to the committee and the voters move on to their next-most-preferred candidate.
Since \Cref{thm:revseq} works for weak PSC, it implies that the reverse sequential version of the Ordered Phragmén rule satisfies PSC. However, it does not inherit the property of candidate monotonicity (see \Cref{prop:rev-seq-phragmen-fails-candidate-monotonicity} in the appendix).

	\iflatexml
		\begin{figure}[t]
			\centering
			\includegraphics{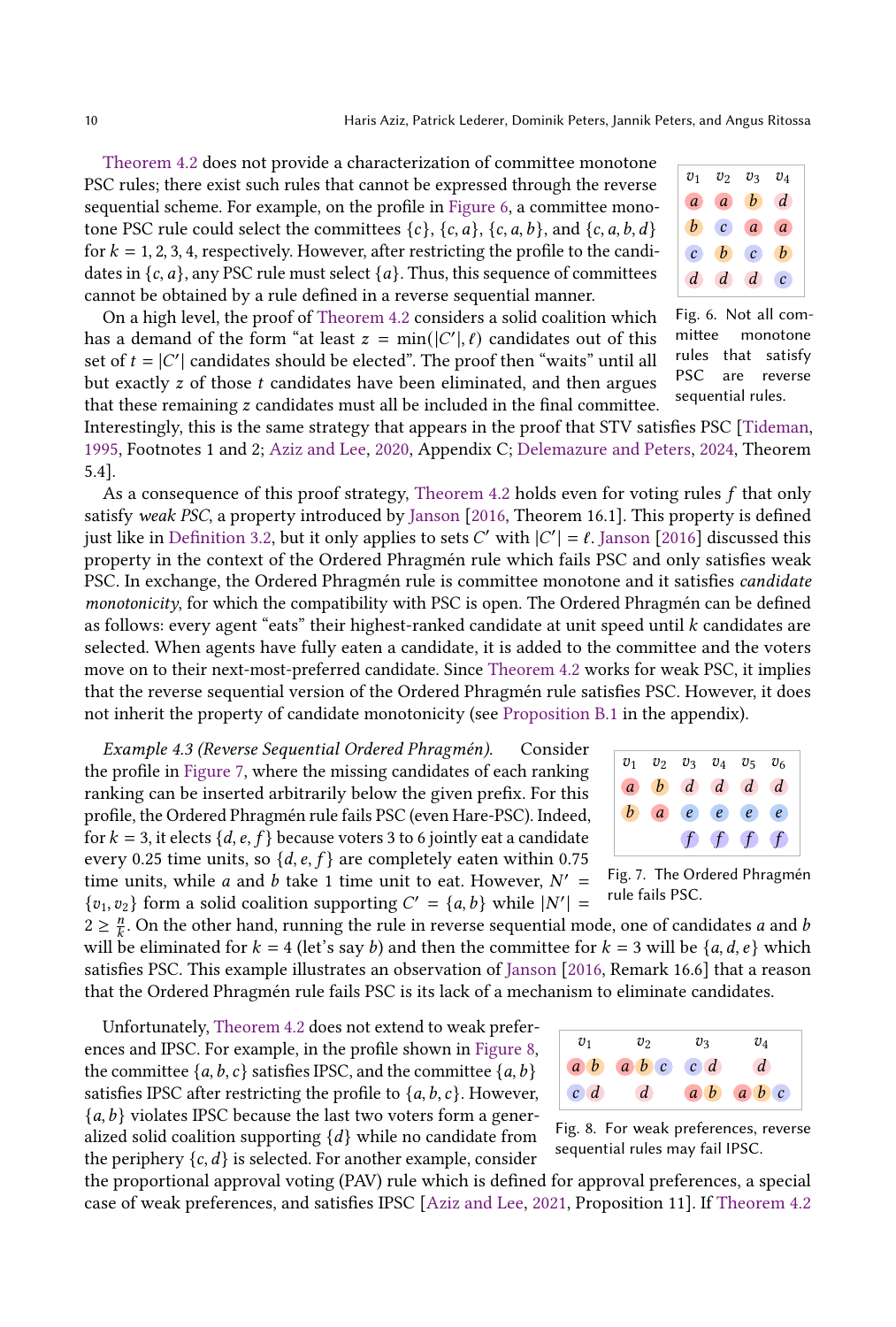}
			\caption{The Ordered Phragmén rule fails PSC.}
			\label{fig:weak_phragmen}
		\end{figure}
		\else
		\begin{wrapstuff}[r,type=figure,width=3.9cm]
			\begin{tikzpicture}
				\node at (0,0) [draw=black!30] (step1) {
					\votermultiplicity{$v_1$}{\weakorder{{a},{b},{}}}
					\votermultiplicity{$v_2$}{\weakorder{{b},{a},{}}}
					\votermultiplicity{$v_3$}{\weakorder{{d},{e},{f}}}
					\votermultiplicity{$v_4$}{\weakorder{{d},{e},{f}}}
					\votermultiplicity{$v_5$}{\weakorder{{d},{e},{f}}}
					\votermultiplicity{$v_6$}{\weakorder{{d},{e},{f}}}
				};
			\end{tikzpicture}
			\vspace{-7pt}
			\caption{The Ordered Phragmén rule fails PSC.}
			\label{fig:weak_phragmen}
		\end{wrapstuff}
	\fi
\begin{example}[Reverse Sequential Ordered Phragmén]
	Consider the profile in \Cref{fig:weak_phragmen}, where the missing candidates of each ranking can be inserted arbitrarily below the given prefix. For this profile, the Ordered Phragmén rule fails PSC (even Hare-PSC). Indeed, for $k = 3$, it elects $\{d, e, f\}$ because voters 3 to 6 jointly eat a candidate every 0.25 time units, so $\{d,e,f\}$ are completely eaten within 0.75 time units, while $a$ and $b$ take 1 time unit to eat. However, $N' = \{v_1, v_2\}$ form a solid coalition supporting $C' = \{a,b\}$ while $|N'| = 2 \ge \frac{n}{k}$. On the other hand, running the rule in reverse sequential mode, one of candidates $a$ and $b$ will be eliminated for $k = 4$ (let's say $b$) and then the committee for $k = 3$ will be $\{a,d,e\}$ which satisfies PSC. 
	This example illustrates an observation of \citet[Remark 16.6]{Jans16a} that a reason that the Ordered Phragmén rule fails PSC is its lack of a mechanism to eliminate candidates.
\end{example}

\iflatexml
\begin{figure}[t]
	\centering
	\includegraphics{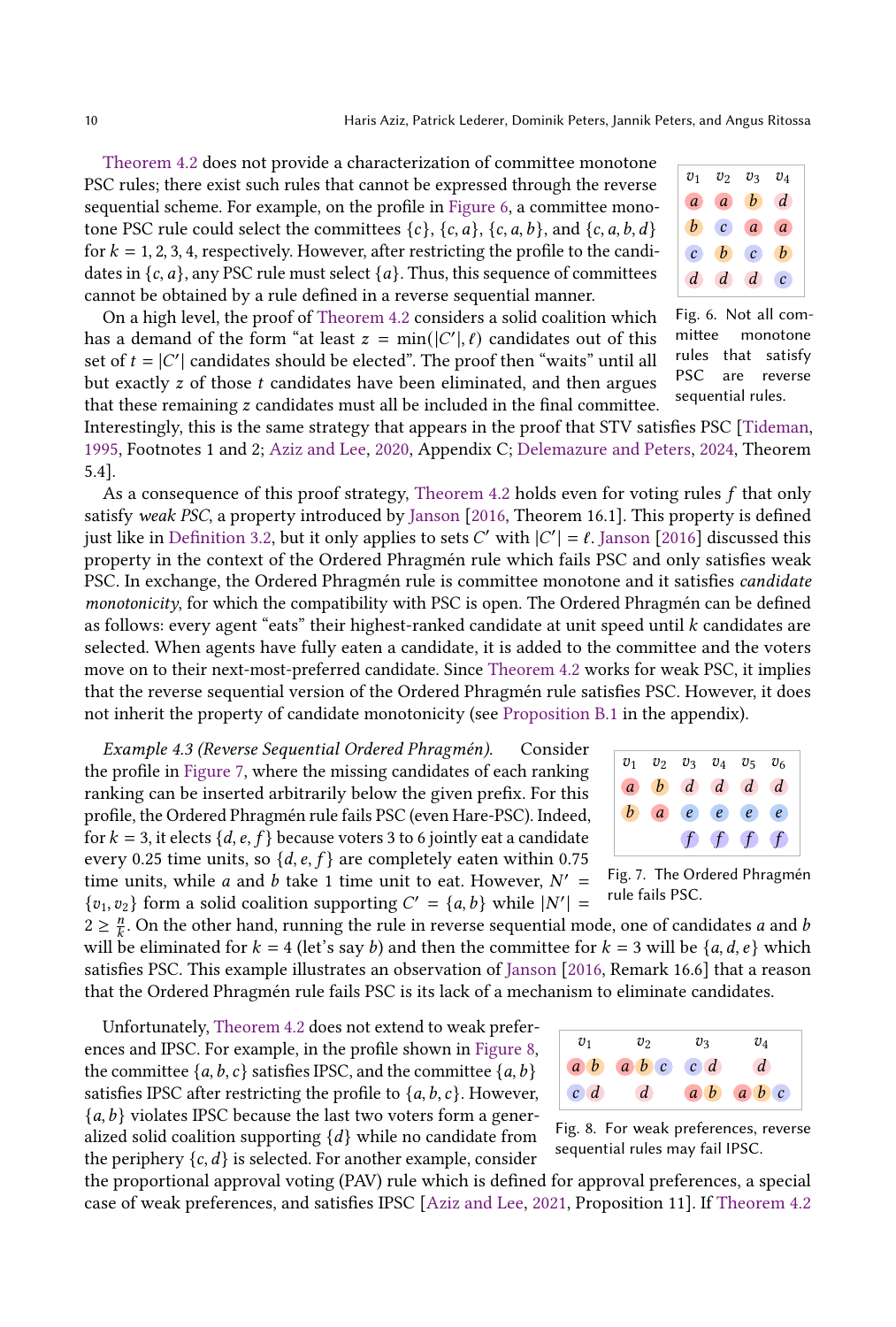}
	\caption{For weak preferences, reverse sequential rules may fail IPSC.}
	\label{fig:rev_seq_doesn't_work_for_weak}
\end{figure}
\else
\begin{wrapstuff}[r,type=figure,width=4.9cm]
	\begin{tikzpicture}
		\node at (0,0) [draw=black!30] (step1) {
			\votermultiplicity{$v_1$}{\weakorder{{a,b},{c,d}}}
			\votermultiplicity{$v_2$}{\weakorder{{a,b,c},{d}}}
			\votermultiplicity{$v_3$}{\weakorder{{c,d},{a,b}}}
			\votermultiplicity{$v_4$}{\weakorder{{d},{a,b,c}}}
		};
	\end{tikzpicture}
	\vspace{-7pt}
	\caption{For weak preferences, reverse sequential rules may fail IPSC.}
	\label{fig:rev_seq_doesn't_work_for_weak}
\end{wrapstuff}
\fi
Unfortunately, \Cref{thm:revseq} does not extend to weak preferences and \ipsc. For example, in the profile shown in \Cref{fig:rev_seq_doesn't_work_for_weak}, the committee $\{a,b,c\}$ satisfies IPSC, and the committee $\{a,b\}$ satisfies IPSC after restricting the profile to $\{a,b,c\}$. However, $\{a,b\}$ violates IPSC 
because the last two voters form a generalized solid coalition supporting $\{d\}$ while no candidate from the periphery $\{c,d\}$ is selected. For another example, consider the proportional approval voting (PAV) rule which is defined for approval preferences, a special case of weak preferences, and satisfies \ipsc \citep[Proposition 11]{azizlee2021}.
If \Cref{thm:revseq} held for weak preferences, then the reverse sequential version of PAV should also satisfy IPSC, and thus also the weaker axioms PJR and JR \citep[Proposition 10]{azizlee2021}.
But reverse sequential PAV fails JR \citep{Aziz17a}. 

\section{The Solid Coalition Refinement Rule}\label{subsec:SCR}

A main drawback of committee voting rules obtained by the reverse sequential transformation is that they are rather technical and hard to analyze. For example, we saw that the Ordered Phragmén rule loses its candidate monotonicity property when run in the reverse sequential mode. In addition, these rules can be computationally intensive, since we need to run $m - k$ rounds of the base rule to determine $k$ winners. This is a concern especially in rank aggregation applications, where users are usually most interested in seeing the top candidates, but reverse sequential rules will start out being busy to determine the bottom candidates.
These drawbacks motivate the search for a more direct rule that combines \psc and committee monotonicity. We thus present another rule called the \emph{Solid Coalition Refinement rule} (or SCR for short), which satisfies both of our desiderata. Moreover, this rule even generalizes to weak preferences and satisfies \ipsc in this case.

Since the SCR rule is inspired by the D'Hondt method of apportionment, we first give a description of this apportionment method. In apportionment, a set of $p$ parties with their respective vote counts $v_1, \dots, v_p$ need to be assigned a total of $h$ seats. The D'Hondt method of apportionment intuitively tries to assign seats such that \emph{each seat represents as many voters as possible}. For example, if a party with 100 votes is assigned 2 seats, each seat represents 50 voters. Now, the D'Hondt method proceeds sequentially, in each step assigning the next seat to a party that optimizes the per-seat representation. Formally, this means selecting a party $i$ that maximizes $v_i/(s_i + 1)$, where $s_i$ is the number of seats assigned to $i$ thus far. Note that $v_i/(s_i + 1)$ corresponds to the per-seat representation should the next seat go to party $i$.
This procedure is repeated until all $h$ seats are filled.

\begin{example}
	Suppose there are three parties with vote counts $v_1 = 100$, $v_2 = 60$, and $v_3 = 40$. The first seat is given to party 1, which received the most votes. The second seat is given to party 2 which represents 60 voters. It is not given to party 1, because that would lead to a per-seat representation of only 50 voters. However, the third seat does go to party 1, and the fourth seat goes to party 3. Thus, if $h = 4$, then the D'Hondt seat distribution would be $s_1 = 2$ and $s_2 = s_3 = 1$.
\end{example}

In the committee voting setting, there are no parties, only individual candidates. However, we can view sets of candidates as ``virtual parties'' that receive votes from the solid coalitions supporting them. Thus, each subset $C' \subseteq C$ of candidates forms a party, supported by $v_{C'} = |\{i \in N : C' \succ_i C \setminus C'\}|$ voters. The SCR rule will sequentially build up a committee $W$ based on this idea. At each step of the procedure, we view the party $C'$ as having been assigned $s_{C'} = |C' \cap W|$ seats thus far. (Since parties overlap, the same seat counts for several parties, but this is not a problem.) Then, following the D'Hondt philosophy, we can identify the party that most deserves being assigned the next seat: we pick a party $C'$ maximizing $v_{C'}/(s_{C'} + 1)$. In selecting this party, the method ignores parties $C'$ that have already been fully satisfied in the sense that $C' \subseteq W$.

Once we have identified a most-deserving party $C'$, we still need to decide which candidate from $C' \setminus W$ to add to the committee. If $|C' \setminus W| = 1$, it is clear that we add the sole candidate in $C'\setminus W$. On the other hand, if $|C'\setminus W|>1$, we choose the candidate based on the same philosophy of maximizing the representation of each seat. Thus, we will look for a strict sub-party of $C'$ (i.e., $C'' \subsetneq C'$) that maximizes $v_{C''}/(s_{C''} + 1)$ and that includes a non-selected candidate (i.e., $C'' \setminus W \neq \emptyset$). We may further repeat this refinement step until we find a party contained in $C'$ that includes just a single unselected candidate. This is the candidate that the SCR rule adds to the committee.
This procedure is repeated until $k$ seats are filled. 

Let us further generalize the SCR rule to weak preferences and \ipsc.
Compared to the strict order case, the concept of virtual parties is necessarily more complex in this setting.
In particular, a virtual party will now not just refer to a set of candidates, but to a pair $(N', C')$ where $N'$ is a generalized solid coalition supporting the set $C'$ of candidates.
The number of voters for this virtual party is $|N'|$, and the number of ``seats'' assigned to the virtual party by a committee $W$ is $|W\cap \periphery{C'}{N'}|$. 
Thus, let us define
\[
	\rho(W,N',C')=\frac{|N'|}{|W\cap \periphery{C'}{N'}|+1}
\]
as the \emph{underrepresentation value} of $(N',C')$ under the committee $W$. 
Hence, a large underrepresentation value means that the generalized solid coalition $N'$ is far from being proportionally represented.
Indeed, we can establish a direct relationship between the underrepresentation value and the \ipsc property.

\begin{proposition} \label{prop:underrep}
	A committee $W$ satisfies \ipsc for a profile $R$ and a committee size $k$ if and only if $\rho(W, N', C')\leq \frac{n}{k+1}$ for every generalized solid coalition $N'$ that supports a set $C'$ with $C'\not\subseteq W$. 
\end{proposition}
\begin{proof}
	Fix a profile $R$ and a committee size $k$. First, suppose that the committee $W$ satisfies \ipsc for $R$ and $k$. Let $N'$ be a generalized solid coalition supporting a candidate set $C'$ with $C'\not\subseteq W$ and let $\ell^*$ denote the maximal integer such that $|N'|>\ell^* \frac{n}{k+1}$, which means that $|N'|\leq (\ell^*+1)\frac{n}{k+1}$. By \ipsc, we have that $|W\cap \periphery{C'}{N'}|\geq \ell^*$, so
	\[\rho(W,N',C')=\frac{|N'|}{|W\cap \periphery{C'}{N'}|+1}\leq \frac{|N'|}{\ell^*+1}\leq \frac{n}{k+1}.
	\]
	
	Conversely, suppose that the committee $W$ fails \ipsc for $R$ and $k$. Thus, there is a generalized solid coalition $N'$ supporting a set $C'$ and an integer $\ell\in \mathbb{N}$ such that $|N'|>\frac{\ell n}{k+1}$, $C'\not\subseteq W$, and $|W\cap \periphery{C'}{N'}|< \ell$. The underrepresentation value for $(N',C')$ is 
	\[\rho(W,N',C')=\frac{|N'|}{|W\cap \periphery{C'}{N'}|+1}\geq  \frac{|N'|}{\ell}> \frac{n}{k+1},
	\]
	which proves the direction from right to left of our proposition.
\end{proof}

As a generalization of the philosophy of the D'Hondt method, the generalized solid coalition with the highest underrepresentation value has the strongest claim to decide the next seat.
As a final piece of notation, we denote by $\Phi(R,W,D)$ the set of pairs $(N', C')$ where $N'$ is a generalized solid coalition supporting the set $C'$ such that $C' \not\subseteq W$ (the coalition is not fully satisfied) and $C' \subsetneq D$ (it refines a given set $D$).

We are now ready to define the SCR rule. Starting from $W=\emptyset$, this rule computes the winning committee by repeatedly adding single candidates to $W$. The next candidate is chosen as follows: we first identify the generalized solid coalition $N'$ supporting a set $C'\not\subseteq W$ such that $(N',C')$ maximizes the underrepresentation value $\rho(W, N', C')$ among all generalized solid coalitions in $\Phi(R,W,C)$. The goal is then to select a candidate from $C'\setminus W$. Analogously to the case of strict preferences, to decide which candidate from $C'\setminus W$ to choose, we identify the generalized solid coalition $N''$ supporting a candidate set $C''$ with $C''\subsetneq C'$ and $C''\not\subseteq W$ such that $\rho(W,N'',C'')$ is maximal among all generalized solid coalitions in $\Phi(R,W,C')$. By repeating this step, we will eventually arrive at a generalized solid coalition $N^*$ supporting a set $C^*$ such that $|C^*\setminus W|=1$, and we add the single candidate in $C^*\setminus W$ to $W$. Put differently, starting from $D=C$, we repeatedly update $D$ to be the set of candidates $C^*$ corresponding to a generalized solid coalition $N^*$
such that $(N^*, C^*)$ maximizes $\rho(W, N^*,C^*)$ among all elements in $\Phi(R,W,D)$ until $|D\setminus W|=1$. Then, we add the candidate in $D\setminus W$ to $W$. A pseudocode description of the SCR rule is given in \Cref{alg:SCR}.

We note that multiple generalized solid coalitions in $\Phi(R,W,D)$ may have the same maximal underrepresentation value in some steps; in such cases, we assume that ties are broken by an arbitrary but fixed ranking $\rhd$ over the sets of candidates $C'\subseteq C$. That is, if two generalized solid coalitions $N'$ and $N''$ with candidate sets $C'$ and $C''$ have the same maximal underrepresentation value in some step of the SCR rule, we choose $(N', C')$  if $C'\rhd C''$ and $(N'',C'')$ otherwise.

\begin{algorithm}[t]
	\caption{The Solid Coalition Refinement Rule}
	\label{alg:SCR}
	\SetKwInOut{Input}{Input}
	\SetKwInOut{Output}{Output}
	\SetKwComment{Comment}{$\triangleright$\ }{}
	\SetInd{0.2em}{0.7em} 
	\Input{A preference profile $R$ and committee size $k$}
	\Output{A committee of $k$ candidates}
	$W \gets \emptyset$
	
	\For{$i\in \{1,\dots, k\}$\label{line:outerloop}} { 
		$D \gets C$
		
		\While{$|D\setminus W|>1$\label{line:whileloop}}  {
			${\Phi(R,W,D) \!\gets\! \{(N',C')\! :\! \text{$N'$ is a generalized solid coalition supporting $C'\!\subsetneq\! D$ with $C'\!\not\subseteq\!W$}\}}$\label{line:define-phi}

			$(N^*,C^*)\gets \argmax_{(N',C')\in \Phi(R,W,D)} \frac{|N'|}{|W\,\cap\, \periphery{C'}{N'}|\,+\,1}$\label{line:choose-N*-and-C*}
			
			$D \gets C^*$
		}
		$W\gets W\cup D$\label{line:add-candidate}
	}
	\Return $W$
\end{algorithm}

\begin{example}\label{example:alg}
	Consider the preference profile in \Cref{fig:SCRexample} with $n = 5$ voters and $m = 4$ candidates. 
	With a committee size $k = 3$, the SCR rule runs as follows.

	\begin{enumerate}[label=\arabic*), leftmargin=*, itemsep=3pt]
		\item %
		\iflatexml
		\begin{figure}[t]
			\centering
			\includegraphics{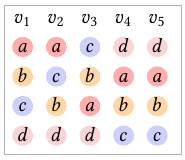}
			\caption{An example for the SCR rule.}
			\label{fig:SCRexample}
		\end{figure}
		\else
		\begin{wrapstuff}[r,type=figure,width=3.4cm]
			\begin{tikzpicture}
				\node at (0,0) [draw=black!30] (step1) {
					\votermultiplicity{$v_1$}{\weakorder{{a},{b},{c},{d}}}
					\votermultiplicity{$v_2$}{\weakorder{{a},{c},{b},{d}}}
					\votermultiplicity{$v_3$}{\weakorder{{c},{b},{a},{d}}}
					\votermultiplicity{$v_4$}{\weakorder{{d},{a},{b},{c}}}
					\votermultiplicity{$v_5$}{\weakorder{{d},{a},{b},{c}}}
				};
			\end{tikzpicture}
			\vspace{-7pt}
			\caption{An example for the SCR rule.}
			\label{fig:SCRexample}
		\end{wrapstuff}
		\fi
		
		Initially, $D = \{a,b,c,d\}$ and $W=\emptyset$. Then:
		
			\begin{itemize}[leftmargin=*, topsep=1pt]
				\item The solid coalition $N' = \{v_1, v_2, v_3\}$ supporting $C' = \{a,b,c\}$ has $\rho(W, N', C')=3$, which is maximal among all solid coalitions in $\Phi(R,W,D)$. SCR sets $D = \{a,b,c\}$.
				\item The solid coalition $N' = \{v_1,v_2\}$ supporting $C' = \{a\}$ has $\rho(W, N', C')=2$, which is maximal among all solid coalitions in $\Phi(R,W,D)$. SCR sets $D = \{a\}$.
			\end{itemize}
	
		Candidate $a$ is selected, so $W=\{a\}$. 
		\item $D$ is reset to  $\{a,b,c,d\}$. Then:
		
		\begin{itemize}[leftmargin=*, topsep=1pt]
		\item The solid coalition $N' = \{v_4,v_5\}$ supporting $C' = \{d\}$ has $\rho(W, N', C')=2$, which is maximal among all solid coalitions in $\Phi(R,W,D)$. SCR sets $D = \{d\}$.
		\end{itemize}
		
		 Candidate $d$ is selected, so $W=\{a,d\}$. 
		\item $D$ is reset to  $\{a,b,c,d\}$. Then:
		
		\begin{itemize}[leftmargin=*, topsep=1pt]
		\item The solid coalition $N' = \{v_1, v_2, v_3\}$ supporting $C' = \{a,b,c\}$ has $\rho(W, N', C')=\frac{3}{2}$, which is maximal among all solid coalitions in $\Phi(R,W,D)$. SCR sets $D = \{a,b,c\}$.
		
		\item The solid coalition $N' = \{v_3\}$ supporting $C' = \{c\}$ has $\rho(W, N', C')=1$, which is maximal among all solid coalitions in $\Phi(R,W,D)$. SCR sets $D = \{c\}$.
		\end{itemize}
		
	Candidate $c$ is selected, so the final committee is $W=\{a,d,c\}$.
	\end{enumerate} 
\end{example}

We next show that the SCR rule is well-defined and runs in polynomial time if the voters have strict preferences. Due to space restrictions, we defer the full proof of this claim to \Cref{app:scrterm}.

\begin{restatable}{proposition}{scrterm}
	\label{lem:scr-time}
	The SCR rule always terminates and produces a committee of the target size $k$.
	Furthermore, for strict preferences, it can be implemented to run in polynomial time.
\end{restatable}
\begin{proof}[Proof Sketch]
	For showing that the SCR rule is well-defined, we note that each iteration of the for-loop (\cref{line:outerloop}) adds exactly one candidate to $W$. This is true because it holds for every pair $(N', C')$ in $\Phi(R,W,D)$ that $|C'\setminus W|\geq 1$ and the while-loop (\cref{line:whileloop}) is only exited when $|D\setminus W|\leq 1$. Moreover, it can be shown that the set $\Phi(R,W,D)$ is always non-empty during the execution of SCR, so the rule indeed produces a committee of the desired committee size $k$. Next, the SCR rule runs for strict preferences in polynomial time because, in this case, the set of (voter-maximal) solid coalitions can be efficiently computed and contains at most $mn$ elements. We hence can solve the maximization problem in \cref{line:choose-N*-and-C*} by iterating through all voter-maximal solid coalitions.
\end{proof}

A natural follow-up question to \Cref{lem:scr-time} is whether the SCR rule can also be computed in polynomial time for weak preferences.
Unfortunately, there is no clear way to compute \cref{line:choose-N*-and-C*} in this case. Indeed, if we could solve that maximization problem for every profile $R$ and committee $W$, we could also decide whether there is a generalized solid coalition $N'$ supporting a set $C'$ such that $C'\not\subseteq W$ and $\rho(W,N',C')>\frac{n}{k+1}$. However, this is equivalent to deciding whether the committee $W$ satisfies \ipsc (\Cref{prop:underrep}), which is a coNP-complete problem \citep{BrPe23a}.

Finally, we will show that the SCR rule satisfies committee monotonicity and \ipsc.
\begin{theorem}	\label{thm:SCR-ipsc-and-monotone}
	Even for weak preferences,
	the SCR rule is committee monotone and satisfies \ipsc.
\end{theorem}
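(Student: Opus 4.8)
The theorem has two parts, and I would tackle them separately, proving committee monotonicity first since it follows almost directly from the structure of the algorithm, and then \ipsc, which requires more work.

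For committee monotonicity, the key observation is that the execution of SCR is \emph{insensitive to the target size $k$} except for the stopping condition. Concretely, the outer loop adds candidates one at a time, and the body of the inner while-loop on \cref{line:whileloop} that selects the next candidate never references $k$: the values $k', k''$ are chosen as the \emph{smallest} integers making some generalized solid coalition underrepresented by the current committee $W$, independent of the target size. Therefore, if I run SCR with target $k$ and then with target $k' > k$, the two runs produce the \emph{same sequence} of elected candidates (under the same tie-breaking) up through the first $k$ additions; the larger run simply continues for $k'-k$ more steps. Formally, I would let $c_1, c_2, \dots$ denote the sequence of candidates elected in order, argue by induction on the step number that this sequence does not depend on $k$, and conclude $f(R,k) = \{c_1,\dots,c_k\} \subseteq \{c_1,\dots,c_{k'}\} = f(R,k')$. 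The only subtlety is ensuring the tie-breaking is consistent across the two runs, which holds as long as tie-breaking depends only on $R$ and $W$, not on $k$.

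For \ipsc, I must show that the final committee $W$ of size $k$ has no generalized solid coalition underrepresented for $k' = k$. Suppose for contradiction that some generalized solid coalition $N^*$ supporting $C^*$ is underrepresented for $k=|W|$, witnessed by an integer $\ell$: so $|N^*| > \ell\cdot\frac{n}{k+1}$, and neither $C^* \subseteq W$ nor $|W \cap \overline{C^*}(N^*)| \ge \ell$. The plan is to show this is impossible because, at the moment SCR terminated, the while-loop selection would have flagged such a coalition and forced another candidate to be added. The heart of the argument is to track, across iterations, how the ``refinement'' inner loop guarantees that each addition to $W$ progressively satisfies the outstanding demands: whenever a coalition is underrepresented for the \emph{current} committee at some value $k'$, SCR keeps refining $D$ through strictly smaller supported sets until a single unelected candidate is pinned down and added, which I would argue strictly decreases the deficit of the most-violated coalition. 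I expect to need a monotonicity/potential argument showing that once $|W|$ reaches $k$, every coalition with $|N^*| > \ell\cdot\frac{n}{k+1}$ has already received its $\ell$ representatives (or its whole supported set), because otherwise the smallest witnessing $k'$ would be at most $k$ and the outer loop — which only stops when $|W|=k$ — would not yet have halted.

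The main obstacle, I anticipate, lies in the interaction between the inner refinement loop and the periphery set $\overline{C^*}(N^*)$ appearing in the \ipsc definition. The delicate point is that the candidate ultimately added comes from a refined set $C' \subseteq D$, and I must verify that this added candidate actually lies in $\overline{C^*}(N^*)$ and thereby counts toward the quantity $|W \cap \overline{C^*}(N^*)|$ for the offending coalition — in other words, that satisfying the refined ``most-violated'' coalitions does not leave a coarser coalition stranded. Establishing this requires a careful claim that the refinement process respects the nesting of supported sets: since each refined $C'$ is supported (weakly) by a subcoalition and is a strict subset of the previous $D$, the chosen candidate is top-ranked highly enough by the relevant voters to fall within the periphery of any coarser coalition it should serve. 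Making this bookkeeping precise — relating the value $k'$ at which a coalition is flagged to the size threshold $\ell\cdot\frac{n}{k+1}$, and confirming that the $k'$-minimality in \cref{line:choosing-k'} correctly prioritizes the most urgently underrepresented coalitions — is where I would concentrate the bulk of the rigor.
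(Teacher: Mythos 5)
Your committee-monotonicity argument is correct and is essentially the paper's: the candidate-selection process in \cref{line:whileloop} and \cref{line:choosing-k'} never consults the target size $k$, so runs with different targets produce the same prefix of elected candidates, and the inclusion follows. The gap is in the \ipsc half, where your plan rests on two claims that do not hold as stated. First, you propose to derive the contradiction from the assertion that if some coalition were still underrepresented for some $k' \leq k$, then ``the outer loop would not yet have halted.'' But the outer loop halts exactly when $|W| = k$, irrespective of any remaining underrepresentation; SCR never tests \ipsc as a stopping condition, so this gives no contradiction. Second, tracking the deficit of a single ``most-violated'' coalition is not enough: a coalition that ends up underrepresented in the final committee need never have been the one flagged at \cref{line:choosing-k'} in any round (other coalitions may always have had a smaller witnessing $k'$), so no round is guaranteed to reduce its deficit, and a per-coalition monotonicity argument cannot close.

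What is missing is the aggregate counting device that makes the induction work. The paper defines the underrepresentation value $\mathcal{U}(W)$ as the maximum, over \emph{voter-disjoint} collections of underrepresented generalized solid coalitions, of the total number of candidates still owed to them (always measured with respect to the final size $k$), and proves the invariant $\mathcal{U}(W) \leq k - |W|$ by induction on $|W|$. The base case is a quota count: each unit of owed representation requires more than $n/(k+1)$ distinct voters, so $\mathcal{U}(\emptyset) \leq k$ or else the disjoint coalitions would contain more than $n$ voters. The inductive step shows that each elected candidate $c$ decreases the potential by at least one: letting $(N', C')$ be the last coalition flagged while electing $c$, either $N'$ is disjoint from every coalition in a maximizing witness family, in which case adding $(N', C')$ to the family contradicts the induction hypothesis for the previous committee, or $N'$ meets some witness $(N_i, C_i)$, in which case either $C_i \subsetneq C'$ (impossible, since the refinement loop would then have continued past $(N', C')$) or a rank argument through a common voter shows $C' \subseteq \overline C_i(N_i)$, so $c$ itself counts toward $(N_i, C_i)$'s periphery and its deficit drops. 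Your ``nesting/periphery'' observation corresponds to this last subcase, but without the voter-disjoint-family potential and its invariant, that observation alone cannot rule out a coalition being left stranded when the algorithm stops at $|W| = k$.
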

\begin{proof}
	Since the SCR rule selects candidates sequentially and independently of the target committee size, it follows immediately that it satisfies committee monotonicity. 
	
	Next, to show that SCR satisfies \ipsc, we fix a profile $R$ and a committee size $k$. Moreover, for all $t\in \{1,\dots, k\}$, we let  $c_t$ denote the $t$-th candidate that SCR adds to the winning committee for $R$, and we define $W^t=\{c_1,\dots, c_t\}$ for all $t\in \{1,\dots,k\}$ and $W^0=\emptyset$. 
	For our proof, we assume that each voter $i \in N$ has a virtual budget $b_i = 1$ and that the candidates will be bought using these budgets for a price of $\frac{n}{k+1}$. We will next show that, for every $t\in \{1,\dots, k\}$, there is a payment scheme for $W^t$ such that 
	\begin{enumerate}
		\item[(i)] the budgets of all voters are always non-negative, and
		\item[(ii)] if a voter $i$ is part of a generalized solid coalition $N'$ supporting a set $C'$ such that $C\not\subseteq W^t$ and $\rho(W^t, N', C')>\frac{n}{k+1}$, then voter $i$ only spends a non-zero amount on candidates in $\periphery{C'}{N'}$.
	\end{enumerate}
	To derive such a scheme, we fix $t\in \{1,\dots, k\}$ and inductively assume that there is a payment scheme for the committee $W^{t-1}$ that satisfies conditions (i)  and (ii). 
	If $t=1$, such a scheme exist for $W^{t-1}$ as $W^0=\emptyset$ and no money has been spent. 
	We next explain how to extend the payment scheme for $W^{t-1}$ to~$W^t$ by reasoning how to pay for the candidate $c_t$. 
	
	For this, we proceed with a case distinction and first assume that 
	$\rho(W^{t-1}, N', C')\leq\frac{n}{k+1}$ 
	for all generalized solid coalitions $N'$ that support a set $C'$ with $C'\not\subseteq W^{t-1}$. In this case, we deduct the money for candidate $c_t$ arbitrarily from the budgets of the voters while ensuring that no budget becomes negative. This is possible as the voters' total budget exceeds the necessary budget to pay for $k$ candidates, i.e., $n>k\frac{n}{k+1}$. Moreover, it holds that 
	$\rho(W^t, N', C')\leq \rho(W^{t-1}, N', C')\leq \frac{n}{k+1}$ for all generalized solid coalitions $N'$ with set~$C'\not\subseteq W^t$, so condition (ii) holds trivially.
	
	For the second case, suppose that there is a generalized solid coalition $N'$ supporting a set $C'$ such that $C'\not\subseteq W^{t-1}$ and  
	$\rho(W^{t-1}, N', C')>\frac{n}{k+1}$.
	In this case, let $N^*$ and $C^*$ denote the last generalized solid coalition and the corresponding set of candidates in the execution of the while-loop of SCR (\cref{line:whileloop}) that satisfies these conditions. By condition (ii), the voters in $N^*$ have only spent money on the candidates in $\periphery{C^*}{N^*}$ so far. Hence, these voter have spent a total budget of at most $|W^{t-1} \cap \periphery{C^*}{N^*}|\frac{n}{k+1}$. By rearranging the assumption that $\rho(W^{t-1}, N^*, C^*)=\frac{|N^*|}{|W^{t-1} \cap \periphery{C^*}{N^*}|+1}>\frac{n}{k+1}$, we infer that $|N^*|>(|W^{t-1} \cap \periphery{C^*}{N^*}|+1)\frac{n}{k+1}$. 
	Therefore, these voters have a total remaining budget of at least $\frac{n}{k+1}$, so they can pay for the candidate $c_t$ without violating condition (i).
	
	It remains to show that this payment scheme for $W^t$ satisfies condition (ii). For this, let $N''$ denote a generalized solid coalition supporting a set $C''$ such that $C''\not\subseteq  W^t$ and $\rho(W^t, N'', C'')>\frac{n}{k+1}$. We assume for contradiction that there is a voter $i\in N''$ who spent money on candidates outside of $\periphery{C''}{N''}$. Since $\rho(W^{t-1}, N'', C'')\geq\rho(W^{t}, N'', C'')>\frac{n}{k+1}$, we infer from the induction hypothesis that voter $i$ has not spent money on candidates outside of $\periphery{C''}{N''}$ during the first $t-1$ steps. Hence, voter $i$ spent money on candidate $c_t$ and $c_t\not\in \periphery{C''}{N''}$. This means that $i\in N^*$, so $C^*$ forms a prefix of voter $i$'s preference relation. Further, because $c_t\not\in \periphery{C''}{N''}$, it follows that $C'' \subsetneq C^*$. However, this means that after SCR selected the solid coalition $N^*$ with set $C^*$ in the while-loop, there is another iteration of the while-loop such that
	\[
		\frac{|N'|}{|W^{t-1}\cap \periphery{C'}{N'}|+1}\geq \frac{|N'|}{|W^{t}\cap \periphery{C'}{N'}|+1}>\frac{n}{k+1}. 
	\]
	This violates the definition of $N^*$ and $C^*$, so the assumption that voter $i$ spent some money on $c_t$ is wrong and condition (ii) holds. 
	
	Finally, we will show that SCR satisfies \ipsc. Assume for contradiction that the committee $W^k$ fails \ipsc for $R$ and $k$. 
	Thus, there is a generalized solid coalition $N'$ supporting a set $C'$ and an integer $\ell\in\mathbb{N}$ such that $|N'|> \ell \frac{n}{k+1}$, $C'\not\subseteq W^k$, and $|W^k\cap \periphery{C'}{N'}|<\ell$. This implies that 
	\[
		\rho(W^k, N', C')=\frac{|N'|}{|W^k\cap \periphery{C'}{N'}|+1}\geq \frac{|N'|}{\ell}>\frac{n}{k+1}. 
	\]
	In turn, condition (ii) of the payment scheme shows that the voters in $N'$ only paid for candidates in $\periphery{C'}{N'}$. Thus, they spent a budget of at most $(\ell-1)\frac{n}{k+1}$. Since the total initial budget of these voters is $|N'|>\ell \frac{n}{k+1}$, their remaining budget is strictly larger than $\frac{n}{k+1}$. However, the total remaining budget of all voters after $k$ candidates have been bought is $n-\frac{kn}{k+1}=\frac{n}{k+1}$. Hence, there must be a voter with a negative budget, which violates condition (i). So, SCR satisfies \ipsc.
\end{proof}

\section{PSC and Irrelevant Voter Blocks}

We next use the SCR rule to answer an open question of \citet{GJM24a} by showing that there is a rule that satisfies both independence of losing voter blocs and \psc. 
In more detail, \citeauthor{GJM24a} study the setting of truncated (strict) preferences, i.e., voters have strict preferences but it is no longer necessary to rank all alternatives. 
More formally, a preference relation $\succsim_i$ is \emph{truncated} if it is a strict preference relation over a subset of the candidates. 
The notion of solid coalitions, and thus also \psc as well as the SCR rule, can be easily extended to truncated preferences. Specifically, solid coalitions are defined just as before and, in particular, only form for sets of voters who rank all alternatives in the supported set of candidates. Then, \psc and the SCR rule can be adapted to truncated preferences by using this new definition of solid coalitions.

Furthermore, \citet{GJM24a} suggest two consistency notions regarding the behavior of committee voting rules when some voters are removed from the election. One of these notions, independence of losing voter blocs, requires that the outcome should not change when we remove voters who only rank unchosen candidates. To formalize this notion, we define by $X({\succsim_i})$ the set of alternatives that are not ranked in the truncated preference relation $\succsim_i$. Moreover, $R_{-N'}$ denotes the profile derived from another profile $R$ by deleting the voters in $N'\subseteq N$ from $R$. Based on this notation, we now define independence of losing voter blocs.

\begin{definition}[Independence of losing voter blocs]
	A committee voting rule $f$ satisfies \emph{independence of losing voter blocs} if $f(R,k)=f(R_{-N'}, k)$ for all truncated preference profiles~$R$, committee sizes $k$, and sets of voters $N'\subsetneq N$ such that $f(R, k)\subseteq X({\succsim_i})$ for all $i\in N'$. 
\end{definition}

\citet{GJM24a} show that no classical proportional rules such as STV and EAR satisfy this property. The reason for this is, on a high-level, that these rules depend on the quota $\frac{n}{k+1}$, which changes if we remove voters. 
In fact, these authors even conjecture that no voting rule satisfies \psc and independence of losing voter blocs at the same time. We show that this conjecture is false as SCR (adapted to truncated rankings) satisfies both conditions.

\begin{theorem}\label{thm:truncated}
	For truncated preferences, the SCR rule satisfies \psc and independence of losing voter blocs.
\end{theorem}
\begin{proof}
	First, an analogous argument as in the proof of \Cref{thm:SCR-ipsc-and-monotone} shows that SCR satisfies \psc for truncated preferences. Hence, we focus on independence of losing voter blocs. To this end, we fix a profile $R$, a committee size $k$, and a set of voters $N'$ such that $W\subseteq X({\succsim_i})$ for all $i\in N'$ and the committee $W$ chosen by SCR for the profile $R$ and the committee size $k$. We need to show that SCR also chooses the committee $W$ for the profile $R_{-N'}$ and the committee size $k$. 
	For this, let $W'$ denote the intermediate committee and $D$ the current set of candidates of \Cref{alg:SCR} during some step of the execution of SCR for $R$ such that $|D\setminus W'|>1$. Moreover, let $(N^*,C^*)$ be the solid coalition and the set of candidates that is chosen next at \Cref{line:choose-N*-and-C*}. Since $W\subseteq X({\succsim_i})$ for all $i\in N'$, there must be for every voter $i\in N'$ a candidate $x\in C^*\cap X({\succsim_i})$. Otherwise, voter $i$ ranks the candidate that will be selected next because $C^*\subseteq C\setminus X({\succsim_i})$. Since solid coalitions form only over sets of voters $N''$ and sets $C''$ such that all voters $i\in N''$ rank all candidates in $C''$, this implies that $N^*\cap N'=\emptyset$. In turn, it follows that $(N^*, C^*)\in \Phi(R_{-N'}, W', D)$. Further, it holds that $\Phi(R_{-N'},W',D)\subseteq \Phi(R, W',D)$ as introducing new voters can only create more solid coalitions. Thus, if SCR agrees on $W'$ and $D$ for both $R$ and $R_{-N'}$ in the current step, it will still agree on these sets for the next step. Since it initially always holds that $W'=\emptyset$ and $D=C$, it follows that SCR chooses $W$ for both $R$ and $R_{-N'}$.
\end{proof}

\section{Committee Monotonicity and \rjr}
\label{sec:rank-jr}

In light of our positive results so far, it is a natural follow-up question whether the SCR rule---or, more generally, any committee monotone voting rule---also satisfies other forms of proportionality. Unfortunately, we will give a negative answer to this question by showing that no committee voting rule satisfies both committee monotonicity and a proportionality notion called \rjr due to \citet{BrPe23a}. 

In more detail, \citeauthor{BrPe23a} suggested a whole family of proportionality notions inspired by fairness axioms from approval-based committee elections. To explain these axioms, we define the \emph{rank} of a candidate~$c$ in a preference relation $\succsim_i$ as $\rank(\succsim_i,x)=|\{y\in A\colon y\succ_ix\}| + 1$. Thus, the most-preferred candidate in a preference relation has rank $1$, while the least-preferred candidate has rank $m$. Now, \citeauthor{BrPe23a} observe that for each $r\in \{1,\dots, m\}$, we can transform a preference profile $R$ into an approval profile $A(R,r)$ by letting each voter approve the alternatives with a rank of at most~$r$. Based on this insight, proportionality notions for approval-based committee elections can be transferred to ranked preferences by requiring that a committee satisfies the given proportionality axiom for the approval profiles $A(R,r)$ for all $r\in \{1,\dots, m\}$. Applying this approach to justified representation (JR) \citep{Aziz_Brill_Conitzer_Elkind_Freeman_Walsh_2015}, one of the weakest fairness notions in approval-based committee voting, results in the following axiom.  

\begin{definition}[\rjr]
	A committee $W$ satisfies \rjr for a preference profile $R$ and committee size $k$ if for all ranks $r\in \{1,\dots, m\}$ and groups of voters $N'\subseteq N$ such that $|N'|\geq\frac{n}{k}$ and $\bigcap_{i\in N'} \{x\in C : \rank(\succsim_i, x) \leq r\}\neq\emptyset$, it holds that $W\cap \bigcup_{i\in N'} \{x\in C : \rank(\succsim_i, x)\leq r\} \neq\emptyset$.
\end{definition}

It turns out that no committee monotone committee voting rule satisfies \rjr, establishing an impossibility theorem.

\begin{theorem}\label{thm:rjr}
	Even for strict preferences, no committee voting rule satisfies both committee monotonicity and \rjr if $n\geq 4$ and $m \ge \frac{n}{q}+1$, where $q\in\mathbb{N}$ satisfies that $\frac{n}{q}\geq 4$ and $\frac{n}{q}\in\mathbb{N}$.
\end{theorem}
\begin{proof}
	Fix some $n$, let $q\in\mathbb{N}$ denote a divisor of $n$ such that $\frac{n}{q}\geq 4$, and define $\ell=\frac{n}{q}$. Consider a set of at least $\ell+ 1$ candidates, the first $\ell + 1$ of which we label as $c_1, \dots, c_\ell$ and $d$. Let $f$ be a committee voting rule that satisfies \rjr and consider the following profile $R$ with $n$ voters:
	\[
		\begin{array}{ll}
			\text{Voters $1,\dots, q$:} \: & c_1 \spref d \spref \cdots \\
			\text{Voters $q+1,\dots, 2q$:} \: & c_2 \spref d \spref \cdots \\
			\text{Voters $2q+1,\dots, 3q$:} \: & c_3 \spref d \spref \cdots \\
			\vdots & \vdots \\
			\text{Voters $n-q+1,\dots, n$:} \: & c_\ell \spref d \spref \cdots
		\end{array}
	\]
	Here, the rankings of the voters can be completed in an arbitrary way.
	
	We will now analyze the committees selected by $f$ for the committee sizes $k = 2$ and $k = \ell$.
	
	We first claim that $d\in f(R,2)$. Assume for a contradiction that $d\not\in f(R,2)$. Since $k = 2$, at most two of the groups in our profile receive their top-ranked candidate in $f(R,2)$. Choose $\lceil \frac{n}{2} \rceil$ voters who do not receive their top-ranked candidate (such voters exist as $\ell \ge 4$). Then \rjr for $r=2$ requires that one of the top-ranked candidates of these voters is chosen, as all of them rank $d$ second but $d\not\in f(R,2)$, a contradiction. Hence, it holds that $d\in f(R,2)$. 
	
	Next, it is easy to see that $f(R,\ell)=\{c_1,c_2,c_3,\dots,c_\ell\}$.
	This follows because, when $k = \ell$, \rjr for $r = 1$ requires that the top-ranked candidate of each group of voters is selected.
	
	This implies that $f$ fails committee monotonicity because $d\in f(R,2)$ but $d\not\in f(R,\ell)$.
\end{proof}

Besides \rjr, \citet{BrPe23a} also adopt several other approval-based fairness axioms to ranked preferences. However, \rjr is their weakest fairness notion for ranked preferences. Thus, showing that no committee monotone voting rule satisfies \rjr implies that, for ranked preferences, no committee monotone rule satisfies any of the proportionality conditions of \citeauthor{BrPe23a}. This includes the property of \emph{Rank-PJR} which strengthens (Hare-)PSC (see \Cref{fig:results} for an overview of implication relationships). Thus, our existence theorems (\Cref{thm:revseq} and \Cref{thm:SCR-ipsc-and-monotone}) cannot be strengthened from (Hare-)PSC to Rank-PJR. 

\section{Application to Rank Aggregation}

As our last contribution, we will discuss the consequences of our results for the problem of rank aggregation. Like in the committee voting setting, the rank aggregation problems takes as input a preference profile (or, more aptly, a ranking profile) $R$. However, the goal is now to compute a winning ranking instead of a winning committee. More formally, rank aggregation is done using \emph{social welfare functions (SWFs)}, which are functions that map every strict preference profile $R$ to a single strict winning ranking $f(R)={\rhdsim}\in\mathcal{L}$. Following \citet{LPW24a}, we will typically write $\rhdsim$ for the output ranking of an SWF, whereas $\succsim$ is reserved for input rankings.

It is easy to see that there is a $1$-to-$1$ correspondence between SWFs and committee monotone committee voting rules. To formalize this, we recall that the rank of an alternative $c$ in a preference relation $\succsim$ is $\rank({\succsim}, c) = \lvert \{c' \in C : c' \succ c\}\rvert + 1$. Then, an SWF $f$ can be converted into a committee monotone voting rule $g$ by setting $g(R, k) = \{c\in C\colon \rank(f(R),c)\leq k\}$, i.e., $g$ chooses the top-$k$ candidates with respect to $f(R)$. Similarly, a committee monotone committee voting rule $g$ can be turned into an SWF by ranking the element in $g(R,k+1) \setminus g(R, k)$ in the $k+1$-st position.\footnote{An analogous correspondence holds between irresolute committee voting rules satisfying a suitable generalization of committee monotonicity and irresolute SWFs \citep{BaCo08a,Elkind2017}.
}

Furthermore, \psc naturally extends to a rank-based proportionality notion for SWFs by requiring that for each $k\in \{1,\dots, m\}$, the top-$k$ candidates with respect to the chosen ranking satisfies \psc. 
\begin{definition}[Proportionality for solid coalitions for rankings]
	Given a preference profile $R$, a ranking $\rhdsim$ satisfies Droop-PSC (Hare-PSC) if for every $k \in \{1,\dots, m\}$, the set $\{x\in C\colon {\rank(\rhdsim, x)\leq k}\}$ satisfies Droop-PSC (Hare-PSC) for the profile $R$ and the committee size $k$.
\end{definition}

Analogously to committee voting rules, an SWF satisfies one of these properties if and only if its chosen ranking always satisfies this property.
\begin{example}
	\iflatexml\else
	\begin{wrapstuff}[r,type=figure,width=5cm]
		\centering
		\begin{tikzpicture}
			\node at (0,0) [draw=black!30] (step1) {	
				\votermultiplicity{$v_1$}{\weakorder{{a},{b},{c},{d},{e}}}
				\votermultiplicity{$v_2$}{\weakorder{{a},{b},{c},{d},{e}}}
				\votermultiplicity{$v_3$}{\weakorder{{a},{b},{c},{d},{e}}}
				\votermultiplicity{$v_4$}{\weakorder{{c},{b},{a},{d},{e}}}
				\votermultiplicity{$v_5$}{\weakorder{{e},{d},{a},{b}, {c}}}
				\votermultiplicity{$v_6$}{\weakorder{{e},{d},{a},{b}, {c}}}
				\votermultiplicity{$v_7$}{\weakorder{{e},{d},{a},{b}, {c}}}
				\votermultiplicity{$v_8$}{\weakorder{{d},{b},{e},{c}, {a}}}
			};
		\end{tikzpicture}
		\vspace{-7pt}
		\captionof*{figure}{Fig. \ref{fig:psc_exp}. An example for PSC.}
	\end{wrapstuff}\fi
	Consider again the profile shown in \Cref{fig:psc_exp}. On this instance, Hare-PSC requires that one of the top-2 candidates in the ranking is $a$, $b$, or $c$, and that $a$ and $e$ are among the top-3. Further, for $k = 4$, the quota is $2$ and therefore at least $2$ of $a$, $b$, or $c$ need to be among the top-4. Hence, a possible Hare-PSC ranking would be 
	$a \succ b \succ e \succ c \succ d$.
	This ranking, however, fails Droop-PSC because, for $k = 2$, the committee $\{a,b\}$ does not satisfy Droop-PSC.
\end{example}

It is easy to see that all our positive results obtained in the previous sections carry over to SWFs, e.g., the SCR rule also satisfies \psc when viewed as an SWF. 

Thus far, we have used PSC as our criterion for proportionality. In the context of rank aggregation, however, there are alternative approaches for measuring the proportionality of a ranking. 
In particular, \citet{LPW24a} propose to analyze the worst-case swap distance of an input ranking to the output ranking, as a function of the fraction of voters that  report the given ranking. 
To make this more formal, we define the swap distance between two ranking $\succsim$ and $\rhdsim$ by $\swap(\succsim, \rhdsim)=|\{(x,y)\in C^2\colon x\succ y\text{ and } y\rhd x\}|$, i.e., the number of pairs the rankings disagree on. 
The central idea of \citeauthor{LPW24a} is that, if an $\alpha$ fraction of the voters report a ranking $\succsim$, then its swap distance to the output ranking should ideally be $(1-\alpha){m\choose 2}$ since $m\choose 2$ is the maximal swap distance between two rankings. 
In other words, in the ideal case, every $\alpha$ fraction of the voters that report the same ranking would agree with the output ranking on at least $\alpha{m\choose 2}$ pairs. 
With this goal in mind, \citeauthor{LPW24a} suggest the Squared Kemeny rule and show that, while this rule does not satisfy this ideal representation guarantee, it satisfies an approximate variant of it. 
In more detail, these authors prove that, if a ranking $\succsim$ is reported by at least $\alpha\cdot n$ voters, the maximum swap distance between $\succsim$ and a ranking $\rhdsim$ chosen by the Squared Kemeny rule is at most $\swap(\succsim,\rhdsim)\leq \sqrt{(1-\alpha)/\alpha} \cdot {m\choose 2}$. 

We next aim to give similar bounds for rankings that satisfy \psc. Specifically, we will show that, up to a small constant factor, the swap distance of a ranking that is reported by an $\alpha$ fraction of the voters and a ranking satisfying \psc is linear in $\alpha$. For instance, this means that the SCR rule gives strong fairness guarantees simultaneously in terms of the swap distance and in terms of PSC. Interestingly, our argument will show that Droop-\psc results in a significantly better bound than Hare-\psc, which can be seen as a first formal comparison of these two concepts. 

\begin{restatable}{theorem}{swapdistance}
	\label{thm:swap}
	Let $R$ be a ranking profile in which at least $\alpha\cdot n$ voters report $\succsim$. Further, let $\rhdsim_H$ be a ranking satisfying Hare-PSC for $R$ and $\rhdsim_D$ a ranking satisfying Droop-PSC for $R$. It holds that
	\begin{enumerate}[leftmargin=*, itemsep=4pt, topsep=4pt, label=(\roman*)]
		\item $swap(\succsim, \rhdsim_H) \le (1 - \alpha + \frac{1+\alpha}{m}) {m \choose 2}$ and 
		\item $swap(\succsim, \rhdsim_D) \le (1 - \alpha)(\frac{m}{m-1}){m \choose 2}$.
	\end{enumerate}
\end{restatable}
\begin{proof}
	We will focus here on the proof for Hare-PSC; the proof for Droop-PSC follows the same strategy and can be found in \Cref{app:swap}. Now, let ${\succsim} = c_1 \succ c_2 \succ \dots\succ c_m$ and fix a profile~$R$. Further, let $N'$ denote the group of voters reporting $\succsim$ and suppose that $|N'|\geq \alpha\cdot n$ for some $\alpha\in [0,1]$. For every rank $r \in [m]$, the set $N'$ forms a solid coalition supporting the set $\{c_1, \dots, c_r\}$. Because $\lvert N'\rvert \ge \alpha \cdot n$, Hare-PSC requires that all $r$ of these candidates are selected when the committee size $k$ satisfies that $\alpha\geq \frac{r}{k}$. 
	Next, let $\rhdsim_H$ denote a ranking that satisfies Hare-PSC for~$R$. From the previous observation, we obtain that all of $c_1, \dots, c_r$ are all ranked among the first $\lceil \frac{r}{\alpha}\rceil$ candidates in $\rhdsim_H$. Now, let $x$ be a candidate ranked by $\rhdsim_H$ at rank $t$ with $\lceil \frac{r}{\alpha}\rceil + 1 \le t\le \lceil \frac{r+1}{\alpha}\rceil$ for some $r\in \{0,\dots, m\}$ such that $\lceil \frac{r}{\alpha}\rceil\leq m-1$. It holds that $x=c_i$ for some candidate $i$ with $i>r$ as otherwise Hare-PSC is violated. Consequently, candidate $x$ can have at most $t-r-1$ inversions to candidates ranked before it in $\rhdsim_H$, as none of the candidates between $c_1$ to $c_r$ incur an inversion. Let $r'$ be the last rank for which $\lceil r'/\alpha\rceil \le m-1$, i.e., $r' = \lfloor \alpha(m-1)\rfloor$. By our previous observation, we can upper bound the number of inversions of $\rhdsim_H$ by
	\begin{align*}
		\swap(\succsim, \rhdsim_H) & \le \sum_{i = 1}^{\lceil 1/\alpha\rceil}(i-1) + \sum_{i = \lceil 1/\alpha\rceil+1 }^{\lceil 2/\alpha\rceil}(i - 2) + \dots +  \sum_{i = \lceil r'/\alpha\rceil+1}^{m}(i - 1-r') \\ 
		&= \sum_{i = 0}^{\lceil 1/\alpha\rceil - 1}(i) + \sum_{i = \lceil 1/\alpha\rceil }^{\lceil 2/\alpha\rceil-1}(i - 1) + \dots + \sum_{i = \lceil r'/\alpha\rceil}^{m-1}(i - r') 
	\end{align*}
	
	In this sum, the term $\lceil i/\alpha\rceil - i$ for $i\in \{1,\dots, r'\}$ appears twice, namely as the last term of the $i$-th sum and as the first term of the $(i+1)$-th sum. By contrast, each other term appears exactly once. Therefore, we get that 
	\begin{align*}
		\swap(\succsim, \rhdsim_H) &\le \sum_{i = 0}^{m - r' - 1} (i) + \sum_{i = 1}^{r'} (\lceil i/\alpha\rceil - i)	\\
		&\leq \sum_{i = 0}^{m - r' - 1} (i)  + (1/\alpha-1)\sum_{i = 1}^{r'} (i) + r' \\
		&= 	(m - r')(m - r'-1)/2 + (1/\alpha - 1)(r'+1)(r')/2 + r'\\
		&=  (r')^2/(2\alpha) + (1 + \tfrac{1}{2\alpha} - m)r' + m^2/2 - m/2.
	\end{align*}
	
	Next, let $f(x) = x^2/(2\alpha) + (1 + \frac{1}{2\alpha} - m)x + m^2/2 - m/2$. The derivative of this function is $\frac{x}{\alpha} + (1 + \frac{1}{2\alpha} - m)$, so we deduce that $f$ has a minimum point at $x' = \alpha (m-1) - \frac{1}{2}$. Since $f$ is a quadratic function and thus symmetric around $x'$, we know that $f(x' - \varepsilon') < f(x' + \varepsilon)$ for all $\varepsilon > \varepsilon' > 0$. Now, if $r' = \lfloor \alpha (m-1) \rfloor\ge (\alpha (m-1)) - \frac{1}{2}$, we directly obtain that $f(r') \le f(\alpha(m-1))$ since $r'\leq \alpha(m-1)$. On the other hand, if $r' < (\alpha (m-1)) - \frac{1}{2}$, we define $\delta = (\alpha (m-1)) - \frac{1}{2} - r' < \frac{1}{2}$. By our previous observation, we also get that $f(r') = f(x' - \delta) < f(x' + \frac{1}{2}) = f(\alpha (m-1))$. Hence, in both cases, it holds that $f(r') \le f(\alpha (m-1))$, so we get that
	\begin{align*}
		\swap(\succsim, \rhdsim_H) &\leq f(r') \le f(\alpha(m-1))\\
		&=  \alpha^2(m-1)^2/(2\alpha) + (1+\tfrac{1}{2\alpha}-m)\alpha(m-1)+m^2/2-m/2\\
		&= \left(1 - \alpha + \frac{1 + \alpha}{m}\right){m \choose 2}.
	\end{align*}
	This proves Claim (i) of our theorem.
\end{proof}

\begin{figure}[t]
	\centering
	\iflatexml
	\includegraphics[width=\linewidth]{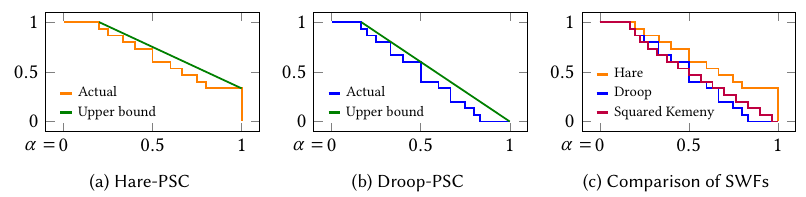}
	\else
	\begin{subfigure}{0.32\textwidth}
	\begin{tikzpicture}
		\begin{axis}[
			width=5.2cm,
			height=3.6cm,
			legend style={shortl},
			legend style={
				cells={anchor=west},
				legend pos=south west,
				draw=none,
				fill=none,
				row sep=-1pt,
				font=\scriptsize
			},
			]
			\addplot [orange, line width=0.03cm,
			] coordinates {
				(0.0,1.0) (0.2, 1) (0.2, 14/15) (0.25, 14/15) (0.25, 13/15) (0.333, 13/15) (0.333, 12/15) (0.4, 12/15)
				(0.4, 11/15) (0.5, 11/15) (0.5, 10/15) (0.5, 10/15) (0.5, 9/15) (0.6, 9/15) (0.6, 8/15) (0.667, 8/15) (0.667, 7/15) (0.75,  7/15) (0.75 ,6/15) (0.8, 6/15) (0.8, 5/15) (1, 5/15) (1,0)
			};
			
			\addplot[color=green!50!black, line width=0.03cm, domain = 0.2:1]{(1-x) + (1+x)/6};
			\legend{Actual, Upper bound}
		\end{axis}
		\node at (-0.1,-0.25) {$\alpha = $};
	\end{tikzpicture}
	\caption{Hare-PSC}
	\end{subfigure}
	\begin{subfigure}{0.32\textwidth}
	\begin{tikzpicture}
		\begin{axis}[
			width=5.2cm,
			height=3.6cm,
			legend style={shortl},
			legend style={
				cells={anchor=west},
				legend pos=south west,
				draw=none,
				fill=none,
				row sep=-1pt,
				font=\scriptsize
			},
			]
			\addplot [blue, line width=0.03cm,
			] coordinates {
				(0.0,1.0) (0.167, 1) (0.167, 14/15) (0.2, 14/15) (0.2, 13/15) (0.25, 13/15) (0.25, 12/15) (0.33, 12/15)
				(0.33, 11/15) (0.33, 10/15) (0.4, 10/15) (0.4, 9/15) (0.5, 9/15) (0.5, 8/15) (0.5, 7/15)  (0.5, 6/15) (0.6, 6/15) (0.6, 5/15) (0.667, 5/15) (0.667, 3/15) (0.75, 3/15) (0.75, 2/15) (0.8, 2/15) (0.8, 1/15) (0.833, 1/15) (0.833, 0) (1,0)
			};
			
			\addplot[color=green!50!black,line width=0.03cm, domain = 0.167:1]{min(1, (1 - x)*6/5)};
			\legend{Actual, Upper bound}
		\end{axis}
		\node at (-0.1,-0.25) {$\alpha = $};
	\end{tikzpicture}
	\caption{Droop-PSC}
	\end{subfigure}
	\begin{subfigure}{0.32\textwidth}
	\begin{tikzpicture}
		\begin{axis}[
			width=5.2cm,
			height=3.6cm,
			legend style={shortl},
			legend style={
				cells={anchor=west},
				legend pos=south west,
				draw=none,
				fill=none,
				row sep=-1pt,
				font=\scriptsize
			},
			]
			\addplot [orange, line width=0.03cm,
			] coordinates {
				(0.0,1.0) (0.2, 1) (0.2, 14/15) (0.25, 14/15) (0.25, 13/15) (0.333, 13/15) (0.333, 12/15) (0.4, 12/15)
				(0.4, 11/15) (0.5, 11/15) (0.5, 10/15) (0.5, 10/15) (0.5, 9/15) (0.6, 9/15) (0.6, 8/15) (0.667, 8/15) (0.667, 7/15) (0.75,  7/15) (0.75 ,6/15) (0.8, 6/15) (0.8, 5/15) (1, 5/15) (1,0)
			};
			\addplot [blue, line width=0.03cm,
			] coordinates {
				(0.0,1.0) (0.167, 1) (0.167, 14/15) (0.2, 14/15) (0.2, 13/15) (0.25, 13/15) (0.25, 12/15) (0.33, 12/15)
				(0.33, 11/15) (0.33, 10/15) (0.4, 10/15) (0.4, 9/15) (0.5, 9/15) (0.5, 8/15) (0.5, 7/15)  (0.5, 6/15) (0.6, 6/15) (0.6, 5/15) (0.667, 5/15) (0.667, 3/15) (0.75, 3/15) (0.75, 2/15) (0.8, 2/15) (0.8, 1/15) (0.833, 1/15) (0.833, 0) (1,0)
			};
			\addplot [purple, line width=0.03cm,
			] coordinates {
				(0.0,1.0) (0.171, 1) (0.171, 14/15) (0.196, 14/15) 
				(0.196, 13/15) (0.226, 13/15)
				(0.226, 12/15) (0.271, 12/15)
				(0.271, 11/15) (0.323, 11/15)
				(0.323, 10/15) (0.377, 10/15)
				(0.377, 9/15) (0.438, 9/15)
				(0.438, 8/15) (0.5, 8/15)
				(0.5, 7/15) (0.567, 7/15)
				(0.567, 6/15) (0.633, 6/15)
				(0.633, 5/15) (0.7, 5/15)
				(0.7, 4/15) (0.767, 4/15)
				(0.767, 3/15) (0.833, 3/15)
				(0.833, 2/15) (0.9, 2/15)
				(0.9, 1/15) (0.967, 1/15) (0.967, 0)(1,0)
			};
			
			\legend{Hare, Droop, Squared Kemeny}
		\end{axis}
		\node at (-0.1,-0.25) {$\alpha = $};
	\end{tikzpicture}
	\caption{Comparison of SWFs}
	\label{fig:intro-alpha-curve-c}
	\end{subfigure}
	\fi
	\vspace{-7pt}
	\caption{The maximum (normalized) swap distance between the output rankings for various rules to an input ranking, as a function of the fraction of the voters that report the input ranking. The swap distance is normalized to be in $[0,1]$. The left figure shows the maximum swap distance between an input ranking of weight $\alpha$ and a ranking that satisfies Hare-PSC. Specifically, we display both our theoretical upper bound (green) and the actual upper bound computed for $m=6$ (orange). Similarly, in the plot in the middle, we show an analogous plot for rankings satisfying Droop-PSC, with green corresponding to our theoretical upper bound and blue being the actual upper bound for $m=6$. Finally, on the right figure, we compare the maximum swap distance to an input ranking of weight $\alpha$ for the Squared Kemeny rule (purple), rankings satisfying Droop-PSC (blue), and rankings satisfying Hare-PSC (orange).}	\label{fig:intro-alpha-curve}
\end{figure}

To gain a better intuition for these bounds, we refer the reader to \Cref{fig:intro-alpha-curve}, where we plotted our theoretical upper bounds as well as the actual worst-case bound for $m = 6$ alternatives with respect to Hare-PSC and Droop-PSC. As we can see, Droop-PSC gives significantly better bounds. In particular, when $\alpha$ is close (but not equal) to $1$, Hare-PSC is not sufficient to determine the first candidate in the ranking and even the bottom-ranked candidate with respect to our input ranking can be top-ranked in the output ranking, whereas Droop-PSC precludes this behavior. Furthermore, in \iflatexml \Cref{fig:intro-alpha-curve}(c)\else \Cref{fig:intro-alpha-curve-c}\fi, we compare the worst-case rankings that satisfy Hare-PSC and Droop-PSC to the worst-case ranking that can be chosen by the Squared Kemeny rule. 
As we can see, rankings that satisfy Droop-PSC perform similarly well as the ones chosen by the Squared Kemeny rule. Since the latter rule is NP-hard to compute \citep{LPW24a} while rules like SCR run in polynomial time, this provides further motivation for using methods that satisfy our fairness conditions.

\section{Conclusions}

In this paper, we present the first committee voting rules for ranked preferences that satisfy committee monotonicity and proportionality for solid coalitions (\psc). Specifically, we first show that such rules can be defined based on known rules that satisfy \psc by running these rules in a reverse sequential mode. However, this approach results in rather unnatural and hard to analyze rules, so we further design a new committee voting rule called the Solid Coalition Refinement rule that simultaneously satisfies committee monotonicity and \psc. This rule is inspired by the D'Hondt method for apportionment, and it can even be generalized to weak preferences, for which it satisfies a strengthening of \psc called inclusion \psc. Using this rule, we also disprove a conjecture by \citet{GJM24a} regarding the compatibility of \psc and a notion called independence of losing voter blocs for truncated preferences. Additionally, we show that committee monotonicity is not compatible with a recently suggested family of fairness concepts by \citet{BrPe23a}. Finally, we re-examine our results in the context of rank aggregation and provide bounds on the maximum swap distance between a given input ranking and a PSC ranking depending on the fraction of voters that report the input ranking.
 
A natural follow-up question is whether our positive results extend to the more general setting of participatory budgeting (PB). 
In this setting, each candidate has a cost and the task is to choose a set of candidates within a prescribed budget. 
For PB, committee monotonicity has been generalized to an axiom called \emph{limit monotonicity} \citep{Talmon2019} and \ipsc has been defined for this setting by \citet{azizlee2021}. However, a simple counterexample shows that limit monotonicity is incompatible with \ipsc.
First, we note that \ipsc implies \emph{exhaustiveness} (the left-over budget cannot be enough to purchase another candidate).
Now, assume that there are two voters and three candidates $c_1, c_2, c_3$ where $c_1$ and $c_2$ have cost $2$ and $c_3$ has cost 1. If $c_1$ and $c_2$ are each ranked first by one of the voters, these candidates must be chosen when the budget is 4.
However, if the budget is 3, exhaustiveness requires $c_3$ to be chosen.
This violates limit monotonicity, so weaker variants of proportionality or committee monotonicity are required to obtain positive results in PB.

Our results also give other directions for future work. In particular, we leave it open whether there is a polynomial-time computable rule for weak preferences that satisfies the axiomatic properties of SCR.
Moreover, much remains unknown about the compatibility of committee monotonicity and proportionality under approval preferences \citep[see also][]{LacknerSkowrongABCvotingBook2023}.

\section*{Acknowledgements}

We are grateful to Tomasz W\k{a}s for advice and for pointing out connections to rank aggregation.
This work was supported by the NSF-CSIRO grant on “Fair Sequential Collective Decision-Making" (RG230833).
Dominik Peters was funded in part by the Agence Nationale de la Recherche
under grant ANR22-CE26-0019 (CITIZENS) and as part of the France 2030 program under
grant ANR-23-IACL-0008 (PR[AI]RIE-PSAI).

\bibliography{./pubs,./paper}
\clearpage

\appendix

\section{Committee Monotonicity of Quota Preference Score Rules}\label{app:existing-rules-and-comm-mon}

QBS was introduced by \citet{Dummett1984} and later generalized into a class called \emph{Minimal Demand} (MD) rules by \citet{AZIZ2022248}.
For each rank $r$, starting at 1 and increasing one by one, these rules perform two steps:

\begin{enumerate}
	\item Partition the voters into solid coalitions, where two voters are in the same solid coalition if their top $r$ ranked candidates are the same, regardless of the ordering within the top $r$.
	\item Consider each solid coalition $N'$ supporting candidate set $C'$ in the partition. If $N'$ is entitled to more representation under \psc, then additional candidates from $C'$ are elected until this entitlement is met. 
\end{enumerate}

The rules in this family differ by the tie-breaking used in the second step.
\citet{Dummett1984} suggests that the Borda score\footnote{For each voter, their lowest-ranked candidate is given 0 points, their second-lowest candidate 1 point, their third lowest 2 points, etc. The candidate with the highest total score is chosen.} be used.
\citet{AZIZ2022248} showed that \psc is satisfied regardless of the tie-breaking method.

We show that MD rules fail committee monotonicity whenever positional scoring is used for tie-breaking.
A positional scoring rule consists of a score vector $(s_1, s_2, \ldots, s_m)$, where $s_1 \geq s_2 \geq \ldots \geq s_m \geq 0$ and $s_1 > s_m$. 
A candidate earns $s_r$ points if a voter ranks them in position $r$.
The tie-breaking in step 2 selects the candidate with the highest total score across all $n$ voters. Further tie-breaking may be needed if there are ties in the positional scoring, but our impossibility result holds regardless of the additional tie-breaking method.

\begin{proposition}
	Minimal Demand (MD) rules that break ties using positional scoring do not satisfy committee monotonicity, even for strict preferences.
\end{proposition}
\begin{proof}
	Consider an MD rule with a positional scoring vector $(s_1, s_2, \ldots, s_m)$.
	We assume $s_m = 0$: if it wasn't, then we could use a new vector $(s_1 - s_m, s_2-s_m, \ldots, 0)$ without changing any tie-breaking decisions.
	We consider two cases depending on the scoring vector.
	
	\paragraph{Case 1: $s_1 < 2 s_2$.}
	For this case, we construct a profile $R$ with $n = 2$ voters and $m = 3$ candidates.
	To increase $n$, the entire profile can be duplicated multiple times.
	To increase $m$, extra candidates can be added to the preference relations of the voters as shown below.
	
	\begin{center}
	 	Voter $1$: $c_1 \spref c_3 \spref \text{ any extra candidates } \spref c_2$\\
	 	Voter $2$: $c_2 \spref c_3 \spref \text{ any extra candidates } \spref c_1$\\
	\end{center}
	
	For $k = 1$, a solid coalition must include both voters to be entitled to representation and so no candidates are elected until $r = m$. 
	The tie-breaking will select candidate $c_3$ since its score of $2 s_2$ exceeds the scores of the other candidates.
	For $k = 2$, voters $1$ and $2$ both separately form solid coalitions when $r = 1$, so candidates $c_1$ and $c_2$ are selected.
	This violates committee monotonicity because $c_3$ is elected when $k = 1$ but not when $k = 2$.
	
	\paragraph{Case 2: $s_1 \geq 2 s_2$.}
	For this case, we construct a profile $R$ with $n = 11$ voters and $m = 12$ candidates.
	To increase $n$, the entire profile can be duplicated $\ell$ times.
	To increase $m$, extra candidates can be added to the end of preference relations of each voter.
	\begin{alignat*}{8}
	 	\text{Voter $1$:} \quad           & c_1 & \,\spref\, & c_2   && \spref c_3    && \spref c_4    && \spref c_9    && \spref & \text{ other candidates }\\
	 	\text{Voter $2$:} \quad           & c_2 & \spref & c_3   && \spref c_4    && \spref c_1    && \spref c_9    && \spref & \text{ other candidates }\\
		\text{Voter $3$:} \quad           & c_3 & \spref & c_4   && \spref c_1    && \spref c_2    && \spref c_9    && \spref & \text{ other candidates }\\
		\text{Voter $4$:} \quad           & c_4 & \spref & c_1   && \spref c_2    && \spref c_3    && \spref c_9    && \spref & \text{ other candidates }\\
		\text{Voter $5$:} \quad           & c_5 & \spref & c_6   && \spref c_7    && \spref c_8    && \spref c_9    && \spref & \text{ other candidates }\\
		\text{Voter $6$:} \quad           & c_6 & \spref & c_7   && \spref c_8    && \spref c_5    && \spref c_9    && \spref & \text{ other candidates }\\
		\text{Voter $7$:} \quad           & c_7 & \spref & c_8   && \spref c_5    && \spref c_6    && \spref c_9    && \spref & \text{ other candidates }\\
		\text{Voter $8$:} \quad           & c_8 & \spref & c_5   && \spref c_6    && \spref c_7    && \spref c_9    && \spref & \text{ other candidates }\\
		\text{Voters $9$ to $11$: } \quad &     &        & c_9   && \spref c_{10} && \spref c_{11} && \spref c_{12} && \spref & \text{ other candidates }\\
	\end{alignat*}
	Note that voters $1$ to $4$ form a solid coalition supporting $\{c_1, c_2, c_3, c_4\}$ and voters $5$ to $8$ form a solid coalition supporting $\{c_5, c_6, c_7, c_8\}$.
	For $k = 2$, the committee will consist of one candidate from $\{c_1, c_2, c_3, c_4\}$ and another from $\{c_5, c_6, c_7, c_8\}$.
	Now consider $k = 1$.
	A solid coalition needs at least 6 voters to be entitled to representation, and any such solid coalition will include $c_9$ in its supported candidates.
	We will show that $c_9$ has a higher score than all of $c_1$ through $c_8$, meaning that committee monotonicity will be violated: 
	\begin{itemize}
		\item Candidates $c_1$ through $c_8$ are ranked first, second, third, and fourth by exactly one voter each. Therefore their scores are upper bounded by $s_1 + s_2 + s_3 + s_4 + 7s_5 \leq s_1 + 3s_2 + 7s_5$.
		\item Candidate $c_9$ has a score of $3s_1 + 8s_5$.
	\end{itemize}
	If $s_2 = 0$, then $s_i = 0$ for all $i \geq 2$ and $c_9$'s score of $3s_1$ is higher than $c_1$ through $c_8$'s score of $s_1$. Otherwise, assume that $s_2 > 0$: 
	\begin{align*}
	 	3s_1 + 8s_5 & \geq s_1 + 4s_2 + 8s_5 & \text{ since $s_1 \geq 2 s_2$}\\
		& > s_1 + 3s_2 + 8s_5 & \text { since $s_2 > 0$}\\
		& \geq s_1 + 3s_2 + 7s_5,
	\end{align*}
	and so $c_9$ has a higher score than $c_1$ through $c_8$.
\end{proof}
Since the Borda rule is a positional scoring rule, we have the following corollary. 
\begin{corollary}
	QBS does not satisfy committee monotonicity.
\end{corollary}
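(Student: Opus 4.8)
The plan is to derive the corollary directly from the lemma that immediately precedes it, which already establishes that \emph{every} Minimal Demand rule breaking ties by a positional scoring rule fails committee monotonicity, even under strict preferences. The only work remaining is to verify that QBS is an instance of this family. By definition, QBS is the MD rule that resolves the tie-breaking in step~2 using the Borda score, and I would recall from the footnote that Borda assigns to a candidate ranked in position $r$ a score of $s_r = m - r$. This score vector satisfies $s_1 \geq s_2 \geq \cdots \geq s_m = 0$ together with $s_1 = m-1 > 0 = s_m$ for every $m \geq 2$, so it is a legitimate positional scoring vector in exactly the sense required by the lemma.

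Given this observation, the conclusion is immediate: QBS is covered by the lemma and therefore inherits the failure of committee monotonicity. No Borda-specific case analysis is needed, since the lemma's proof already disposes of both the regime $s_1 < 2s_2$ (into which the Borda vector falls when $m > 3$) and the regime $s_1 \geq 2s_2$ (into which it falls when $m \leq 3$). The only step that could even be called an obstacle is confirming that the Borda vector meets the formal definition of a positional scoring rule used in the lemma, which is routine; there is no substantive difficulty beyond this identification.
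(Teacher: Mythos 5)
Your proposal is correct and matches the paper's own argument exactly: the paper derives the corollary in one line by noting that the Borda score is a positional scoring rule, so QBS (the MD rule with Borda tie-breaking) is covered by the preceding lemma. Your additional verification that the Borda vector $(m-1, m-2, \ldots, 0)$ meets the lemma's formal requirements, and your observation about which case of the lemma's proof applies for which $m$, are harmless elaborations of the same approach.
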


\section{Reverse Sequential Ordered Phragmén Fails Candidate Monotonicity}

A committee voting rule is \emph{candidate monotonic} if for every profile $R$, every candidate $c$ and committee size $k$, if $c \in f(R, k)$, and the profile $R'$ is obtained from $R$ by moving $c$ up in some voters' rankings without changing anything else, then $c \in f(R', k)$. \citet[Theorem 14.3]{Jans16a} proves that the Ordered Phragmén rule satisfies this property. However, as we now show, the reverse sequential variant of this rule fails it.

\begin{proposition}
	\label{prop:rev-seq-phragmen-fails-candidate-monotonicity}
	The reverse sequential rule of the Ordered Phragmén rule fails candidate monotonicity.
\end{proposition}
\begin{proof}
Consider the following profile:
\begin{center}
	\begin{tikzpicture}
		\node at (0,0) [draw=black!30] (step1) {
			\votermultiplicity{$v_1$}{\weakorder{{a},{b},{c},{d}}}
			\votermultiplicity{$v_2$}{\weakorder{{a},{b},{c},{d}}}
			\votermultiplicity{$v_3$}{\weakorder{{a},{d},{b},{c}}}
			\votermultiplicity{$v_4$}{\weakorder{{c},{a},{b},{d}}}
			\votermultiplicity{$v_5$}{\weakorder{{d},{c},{a},{b}}}
		};
	\end{tikzpicture}
\end{center}
On this profile, reverse sequential Ordered Phragmén selects the committees $\{a,b,d\}$, followed by $\{a,b\}$, followed by $\{a\}$. In particular, $b$ is a winner for $k = 2$.

The following profile is obtained by moving up $b$ in the ranking of voter 3:
\begin{center}
	\begin{tikzpicture}
		\node at (0,0) [draw=black!30] (step1) {
			\votermultiplicity{$v_1$}{\weakorder{{a},{b},{c},{d}}}
			\votermultiplicity{$v_2$}{\weakorder{{a},{b},{c},{d}}}
			\votermultiplicity{$v_3$}{\weakorder{{b},{a},{d},{c}}}
			\votermultiplicity{$v_4$}{\weakorder{{c},{a},{b},{d}}}
			\votermultiplicity{$v_5$}{\weakorder{{d},{c},{a},{b}}}
		};
	\end{tikzpicture}
\end{center}

On this profile, reverse sequential Ordered Phragmén selects the committees $\{a,c,b\}$, followed by $\{a,c\}$, followed by $\{a\}$. In particular, $b$ has stopped being a winner for $k = 2$.

Thus, this example shows that the reverse sequential Ordered Phragmén rule fails candidate monotonicity.
\end{proof}

\section{Proof of Proposition~5.4}\label{app:scrterm}

\scrterm*
\begin{proof}
	First, we will show that if the SCR rule terminates, it produces a committee of size $k$. To this end, we note that each iteration of the outer loop (\cref{line:outerloop}) adds exactly one candidate to $W$. The reason for this is that the set $\Phi(R,W,D)$ only contains pairs $(N',C')$ such that $C'\not\subseteq W'$, which means that $D$ will always contain at least one alternative not in $W$. On the other hand, we only reach \cref{line:add-candidate} after the end of the while-loop, which requires that $|D\setminus W|\leq 1$. In combination, this means that $|D\setminus W|=1$, so each iteration of the outer loop adds a single candidate.
	
	Next, we will show that SCR always terminates. To this end, we note that if $\Phi(R,W,D)$ is always non-empty during the execution of SCR, then each iteration of the while-loop (\cref{line:whileloop}) reduces the size of $D$ by at least $1$. This holds because $C'$ is a strict subset of $D$ for all $(N',C')\in \Phi(R,W,D)$. Now, consider the set $D$ in some round during the execution of the SCR rule and suppose that $|D\setminus W|>1$. We first note that $D$ is supported by a generalized solid coalition $N''$. In more detail, if $D=C$, this holds as $C$ is supported by the set of all voters $N$. On the other hand, if $D\neq C$, then $D$ was chosen as the set of candidates supported by a generalized solid coalition $N''$ in the previous iteration of the while-loop. Next, let $i\in N''$, which means that $d\succsim_i c$ for all $d\in D$, $c\in C\setminus D$. Moreover, let $d^*$ denote one of voter $i$'s least favorite candidates in $D\setminus W$ and define $D'=\{c\in D\setminus \{d^*\}\colon c\succsim_i d^*\}$. By the definition of this set and $d^*$, it holds that $D'\subsetneq D$,  $D'\not\subseteq W$, and that $\{i\}$ is a generalized solid coalition supporting $D'$. Hence, $(\{i\}, D')\in \Phi(R,W,D)$, which proves that $\Phi(R,W,D)\neq\emptyset$. Thus, the SCR rule is indeed well-defined.
	
	Finally, we show that the SCR rule can be computed in polynomial time for strict preference profiles.
	Since \cref{line:whileloop} is repeated at most $mk$ times, we only need to show that the procedure in the while-loop can be done in polynomial time.
	For this, we note that for strict preferences, \ipsc coincides with \psc and a generalized solid coalition is equivalent to a solid coalition.
	Furthermore, there are only $nm$ candidate sets $C'$ such that $(N',C')$ can be in $\Phi(R,W,D)$: for each voter $i$ and each rank $r$, we can obtain one such set by considering the $r$ most preferred candidates of voter $i$. 
	Finally, we can compute \cref{line:choose-N*-and-C*} by evaluating $\rho(W, N', C')$ for all such sets $C'$ and the maximal solid coalitions $N'$ supporting $C'$. This works because for any two solid coalitions $N'$, $N''$ supporting $C'$, it holds that $\periphery{C'}{N'}=\periphery{C'}{N''}$ and thus $\rho(W, N', C')\geq \rho(W, N'', C')$ if $|N'|\geq |N''|$. Moreover, we can compute the maximal solid coalition supporting a set $C'$ by simply checking for each voter whether he supports $C'$. Since all of these steps can be done in polynomial time, it follows that the SCR rule can be computed in polynomial time for strict preferences. 
\end{proof}

\section{Proof of Theorem 8.3}\label{app:swap}

\swapdistance*
\begin{proof}
	We provide here the proof for the claim on Droop-\psc, which works analogously to the proof for Hare-\psc in the main body. Specifically, we fix again a profile $R$, a group of voters $N'$ that all report the same preference relation $\succsim$ in $R$, and we let $\alpha\in[0,1]$ such that $|N'|\geq \alpha\cdot n$. Since $N'$ forms a solid coalition for every set $\{c_1,\dots,c_r\}$ with $r\in \{1,\dots, m\}$, Droop-PSC implies that all candidates $c_1,\dots, c_r$ are chosen when the committee size $k$ satisfies that $\alpha>\frac{r}{k+1}$. Next, let $\rhdsim_D$ denote a ranking that satisfies Droop-PSC for $R$. Solving our previous inequality for $k$ shows that, for all $r$, the candidates $\{c_1,\dots,c_r\}$ are among the top-$\ell$ in every ranking that satisfies Droop-PSC, where $\ell$ is the smallest rank larger than $\frac{r}{\alpha} - 1$. Now, let $\varepsilon > 0$ be fixed. Then, based on our previous considerations, we know that the candidates $c_1, \dots, c_r$ are all ranked in the first $\lceil \frac{r}{\alpha} + \varepsilon\rceil - 1$ candidates. Let $x$ be a candidate ranked by $\rhdsim_D$ at rank $t$ with $\lceil \frac{r}{\alpha} + \varepsilon\rceil \le t \le \lceil \frac{r+1}{\alpha} + \varepsilon\rceil - 1$. By Droop-PSC, it must be that $x=c_i$ for some $i>r$, so candidate $x$ can have at most $t-r-1$ inversions to candidates ranked before it in $\rhdsim_D$. The last rank $r'$ with $\lceil \frac{r'}{\alpha} + \varepsilon\rceil - 1 \le m-1$ is now $r' = \lfloor \alpha (m - \varepsilon)\rfloor$. Using this and the same technique as for the Hare quota, we derive the following inequality:
\begin{align*}
	\swap(\succsim, \rhdsim_D) &\le \sum_{i = 0}^{\lceil 1/\alpha + \varepsilon\rceil - 2}(i) + \sum_{i = \lceil 1/\alpha + \varepsilon\rceil -1}^{\lceil 2/\alpha + \varepsilon\rceil - 2}(i - 1) + \dots +  \sum_{i = \lceil r/\alpha + \varepsilon\rceil -1}^{\lceil (r+1)/\alpha + \varepsilon\rceil - 2}(i - i) + \dots  \sum_{i = \lceil r'/\alpha + \varepsilon \rceil-1}^{m-1}(i - r') \\
	& \le \sum_{i = 0}^{m - r' - 1} (i) + \sum_{i = 1}^{r'} (\lceil i/\alpha + \varepsilon\rceil - i - 1)\\
	& \le (m - r')(m - r'-1)/2  + \sum_{i = 1}^{r'} ( i/\alpha - i + \varepsilon)\\
	& =  (m - r')(m - r'-1)/2 + (1/\alpha - 1)(r'+1)(r')/2 + \varepsilon r'.\\
\end{align*}
Just as in the previous case, this is a quadratic function in terms of $r'$. This quadratic function ($(r')^2/(2\alpha) + (\varepsilon + \frac{2}{\alpha} - m)r' + m^2/2 - m/2)$) has a minimum point at $\alpha( m - \varepsilon) - \frac{1}{2}$ which again allows us to upper bound the swap-distance by substituting $r'$ with $\alpha (m - \varepsilon)$, which gives us
\begin{align*}
	\swap(\succsim, \rhdsim_D) & \le (m - \alpha(m - \varepsilon))(m - \alpha(m - \varepsilon)-1)/2\\ 
	&\quad + (1/\alpha - 1)(\alpha(m - \varepsilon)+1)(\alpha(m - \varepsilon))/2 + \varepsilon \alpha(m - \varepsilon)\\
	&= \frac{\alpha \varepsilon^2}{2} - \frac{\varepsilon}{2} - \frac{\alpha m^2}{2} + \frac{m^2}{2} + \varepsilon \alpha (m - \varepsilon). 
\end{align*}
For $\varepsilon \to 0$, this term converges to the desired bound of $((1 - \alpha) \frac{m}{m-1})\cdot {m\choose 2}$.
\end{proof}

\end{document}